\newtheorem{theorem}{Theorem}
\newtheorem{lemma}[theorem]{Lemma}
\newenvironment{pfsketch}{\noindent{Proof Sketch}\hspace*{1em}}{\qed\bigskip}
\newcommand{\hi}[1]{\colorbox{yellow}{#1}}
\begin{document}

\title{OCCAM: An Optimization-Based Approach to Network Inference}

\numberofauthors{3} %  in this sample file, there are a *total*
% of EIGHT authors. SIX appear on the 'first-page' (for formatting
% reasons) and the remaining two appear in the \additionalauthors section.
%
\author{
  % You can go ahead and credit any number of authors here,
  % e.g. one 'row of three' or two rows (consisting of one row of three
  % and a second row of one, two or three).
  %
  % The command \alignauthor (no curly braces needed) should
  % precede each author name, affiliation/snail-mail address and
  % e-mail address. Additionally, tag each line of
  % affiliation/address with \affaddr, and tag the
  % e-mail address with \email.
  %
  % 1st. author
  \alignauthor
Anirudh Sabnis\\
  \affaddr{UMass Amherst}\\
  \email{asabnis@cs.umass.edu}
  \alignauthor
 Ramesh K. Sitaraman\\
  \affaddr{UMass Amherst}\\
  \email{ramesh@cs.umass.edu}
  \alignauthor
 Donald Towsley\\
  \affaddr{UMass Amherst}\\
  \email{towsley@cs.umass.edu}
}
\maketitle
\begin{abstract}
We study the problem of inferring the structure of a communication network based only on network measurements made from a set of hosts situated at the network periphery.  Our novel approach called ``OCCAM'' is based on the principle of occam's razor and finds the ``simplest'' network that explains the observed network measurements.  OCCAM infers the internal topology of a communication network, including the internal nodes and links of the network that are not amenable to direct measurement. In addition to network topology, OCCAM infers the routing paths that packets take between the hosts. OCCAM uses  path metrics measurable from the hosts and expresses the observed measurements as constraints of a mixed-integer bilinear optimization problem that can then be feasibly solved to yield the network topology and the routing paths. We empirically validate OCCAM on a wide variety of real-world ISP networks and show that its inferences agree closely with the ground truth. Specifically, OCCAM infers the topology with an average network similarity score of 93\% and infers routing paths with a path edit distance of 0.20. Further, OCCAM is robust to error in its measured path metric inputs, producing high quality inferences even when 20-30\% of  its inputs are erroneous. Our work is a significant advance in network tomography as it proposes and empirically evaluates the first method that infers the {\em complete} network topology, rather than just logical routing trees from sources.  
\end{abstract}

\section{Introduction}
Enterprises rely heavily on the Internet and other communication networks for their operations. However, they lack explicit knowledge about the topological properties of their network, such as the nodes and links of the network and the routes that packets take between their hosts.  In fact, communication networks are often administered by multiple entities and no single entity may have apriori knowledge of the topology of the entire network. However, there are great benefits for enterprises to know the topological properties of their communication network.  For instance, by deducing the graph structure of the network and the routing paths between their hosts, the enterprise can better understand the impact of node (i.e., router) and link failures on their mission-critical communication, leading to better  disaster planning and recovery.  Further, knowing the network topology and routes allow for better performance monitoring and network resource management for enterprise communication. 

 Formally, a communication network $N=(G,H,P)$ can be represented by a graph $G = (V,E)$, where $V$ is the set of nodes and $E$ is the set of links, a set of hosts $H \subset V$, and a set of routing paths $P$ in $G$ between host pairs in $H \times H$. An example of a communication network is shown Figure~\ref{fig:inferenceexample}.  Our work focuses on the problem of infering network $N$ using {\em only} metrics measured from the hosts $H$.  {\em Network inference} includes both {\em topology inference} that infers $G$ and {\em route inference} that infers the routes $P$ between each pair of hosts.

\subsection{Prior Work in Network Inference}
To set our research in context, we review prior work in network inference that has been an active area of research for more than two decades, given its importance in many practical contexts. Much prior work can be put into two broad categories depending on what measurements can be made and to what extent the non-host nodes in $V \setminus H$ assist in those measurements.  

The first category of work assumes that active probes (such as  traceroute and mtrace) and data feeds  (such as BGP) can be used for network inference.  For example, Skitter \cite{Claffy} and its successor Archipelago \cite{CAIDA2015} derives the topology of the Internet using traceroutes and BGP tables.  Rocketfuel \cite{Spring2004} infers the topology of an ISP using traceroutes, BGP, and DNS measurements. Dimes \cite{Shavitt2005} aims to infer topology by running traceroutes from applications installed by volunteers on their personal computers, as opposed dedicated machines as hosts. Doubletree \cite{Donnet2004} modifies traceroute to be more efficient by making the assumption that paths from a source or paths to a destination form a tree.

The above work require the non-host nodes in the network to support specific types of active probes (e.g, traceroutes) and/or to provide measurement feeds (e.g., BGP). However, for reasons of security, many networks (e.g., military networks) do not allow probes such as traceroutes and do not expose other internal network data that may be used for inference.  Even in civilian networks such as the Internet, an increasing fraction of routers do not respond to  traceroutes \cite{Yao2003, Gunes2009}.  Further, future networks may obfuscate topology inference by returning false traceroutes  \cite{Trassare2013}. 

In the past two decades,  such considerations have led to a second category of work  that we refer to as {\em ``network tomography''}  that aims to infer  topology and routes with {\em minimal co-operation} from the non-host network elements (i.e., no traceroutes or data feeds) \cite{vardi1996network,Castro2004,Bu2002,Duffield2002,presti2002multicast,caceres1999multicast}.  In the network tomography literature, network inference is typically performed using only easily-measurable {\em path metrics} derived from the hosts, such as path distance (in number of hops) between hosts and path sharing that is the (relative) amount of link sharing between two host-to-host paths  (c.f., Section~\ref{sec:pathmetrics}). {\em Our focus is network tomography, as we use only path metrics and assume no co-operation from non-host network elements.}

 Early work on network tomography focused in inferring the logical source tree rooted at a host, not the entire network \cite{Ratnasamy1999, Duffield2002,Duffield2004, Bu2002,presti2002multicast}. A source tree is the logical tree formed by the routes from a host as the root to the other hosts as the leaves. A source tree is {\em not a subgraph} of the topology $G$, but rather the logical tree that describes how paths from a source to other destinations  bifurcate. For instance, the source tree rooted at host $A$ in Figure~\ref{fig:inferenceexample}  is shown in Figure~\ref{fig:srctreeA}. Ratnasamy et al. \cite{Ratnasamy1999} propose a method to infer a  binary source tree by using multicast probes sent from a source host to a set of destination hosts. The tree is constructed by observing that destinations experiencing correlated losses have a common shared path from the source, and the amount of correlation increases with the length of the shared path. Subsequently, these results were extended to infer non-binary source trees with theoretical guarantees \cite{Duffield2002}, using delay covariances at the hosts instead of losses \cite{Duffield2004},  and using a train of unicast packets instead multicast \cite{Duffield2006}.
%\textcolor{red}{R. Bowden and D Veitch - Under link spatial dependence \cite{Bowden2018}}.

Besides inferring single source trees in isolation, how multiple source trees intersect has been studied.  Given two source trees, \cite{Rabbat2004,Coates2003} discover the links where the trees intersect, using path sharing metrics between pairs of sources and destinations. But, the technique does not allow the trees to be merged into a single network, unless the trees overlap in very specific ways that do not hold for general topologies. Thus, these techniques cannot be used to produce the complete topology and routes as we do in our work.

Recently, there has been some progress on complete topology inference. A sparse random graph with shortest-path routing can be inferred  with small error and with high probability, though using primitives with no established techniques for measurement \cite{Anandkumar2011}.  Concurrent to our work,  an interesting theoretical advance shows that certain classes of graph topologies can be inferred using stronger primitives that allow the measurement of distances from hosts to certain (non-host) internal nodes \cite{Berkolaiko2018}. Specifically, it is assumed that given paths from a host A to two hosts B and C, the individual distances from A, B, and C to the  internal node where the paths diverge can be measured. However, it is not known how such a stronger primitive can be implemented accurately in a real-world network, while our work uses only path metrics as primitive with well-known accurate implementations.  

{\em Thus, our work is a significant step forward in network tomography as we provide the  first \underline{empirically-}  \underline{validated} method for inferring the \underline{complete} topology and routes of real-world networks.}
\subsection{Path metric inputs to network inference} 
\label{sec:pathmetrics}
As in much of the network tomography literature, we assume that path metrics of two types are available as inputs to network inference: path sharing metrics (PSMs) and path distance metrics (DMs).  Much is known about how to measure them by sending multicast \cite{Duffield2002, Duffield2004, Bu2002, presti2002multicast,caceres1999multicast} or unicast \cite{Duffield2006} packet probes, and passive measurement that deduces the information from existing traffic flows \cite{eriksson2007learning}. We measure PSMs and DMs in standard ways known in the literature. {\em Our contributions lie not in how these metrics are measured, but on how they can used to perform network inference.} 

{\bf 1) Path Sharing Metrics (PSMs).} PSMs measure to what extent routes (i.e., paths) between hosts share links. Let $PSM(S,T_i,T_j)$ represent the number links shared between the paths from a single source host $S$ to two destination hosts $T_i$ and $T_j$. Our work does not require measuring absolute values for the PSMs, but only relative ones. For instance, given a source $S$ and three destinations $T_1$, $T_2$, and $T_3$, it suffices to measure how $PSM(S,T_1.T_2)$ compares with $PSM(S,T_2,T_3)$.
It is well-known how relative PSMs can be computed using latency and/or loss experienced by packet probes from the source host \cite{Duffield2002,Duffield2004, Bu2002,presti2002multicast}. For instance, in Figure~\ref{fig:inferenceexample}, by sending multicast (or, a train of unicast) packet probes from a source $A$ to receivers $C$, $D$, and $E$,  one can infer that $PSM(A,C,E) < PSM(A,C,D)$, since more correlation is expected between the packets received at $C$ and $D$ than between $C$ and $E$.  

In fact, it is well-known from prior work \cite{Duffield2002, Duffield2004} how the entire source tree can be inferred by repeatedly using the relative values of the PSMs. For instance, the source tree in Figure~\ref{fig:srctreeA} can be constructed by making the most correlated pair of destinations (hosts C and D) as siblings. Next, we can extract the next-most correlated destination with either C or D  (host E)  and make E a sibling of the parent of C and D, and so on till the entire source tree is inferred. 

{\bf 2) Distance Metrics (DMs)}.
A DM measures the distance, i.e., the number of links, in the path from a source host $S$ to hosts $T$. Again, we do not require absolute values of the DMs, and relative ones will suffice. More precisely, given a single source $S$ and two destinations $T_1$ and $T_2$, it suffices to measure how $DM(S,T_1)$ compares with $DM(S,T_2)$. For instance, in Figure~\ref{fig:inferenceexample}, it is easy to see that $DM(A,F) < DM(A,D)$.

A standard approach to measuring DMs is to use the  $TTL$ field  \cite{Postel1981} of the IP Header. The source host initializes the $TTL$ value in the IP Header to 255, and each node on the path to the destination decrements the $TTL$ by 1. At the destination, the $TTL$ value is read from the IP Header. And the end-to-end distance between the source and destination host is calculated by taking the difference.

\begin{figure}[!t]
%\vspace*{-0.2in}
\centering
\includegraphics[width=0.75\linewidth, height=38mm]{./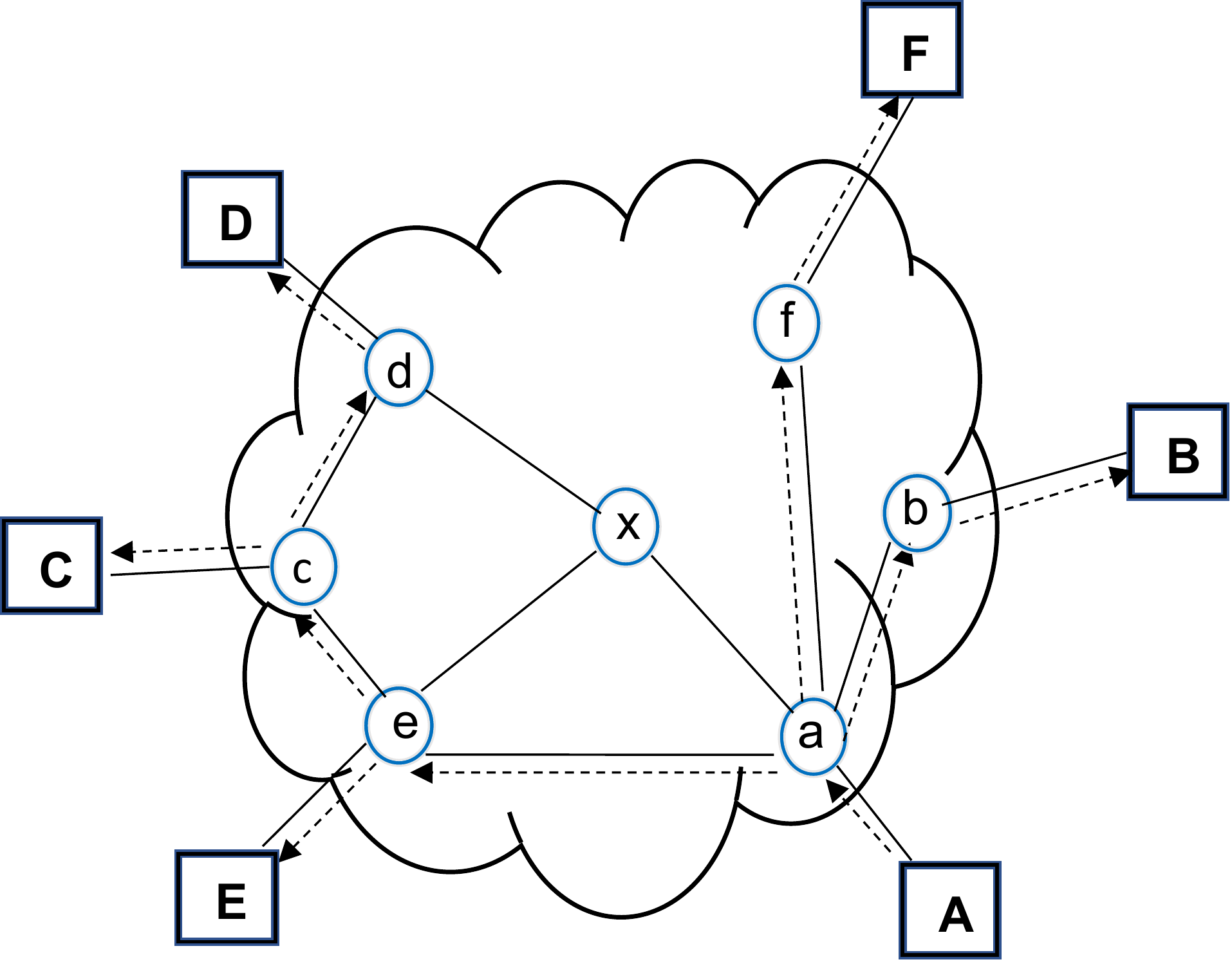}
\caption{Communication network with hosts, internal nodes (routers) and links. The routing paths from host A to other hosts are shown in dotted lines.}
\vspace*{-0.1in}
\label{fig:inferenceexample}
\end{figure}

\begin{figure}[!t]
%\vspace*{-0.2in}
\centering
\includegraphics[width=0.5\linewidth, height=46mm]{./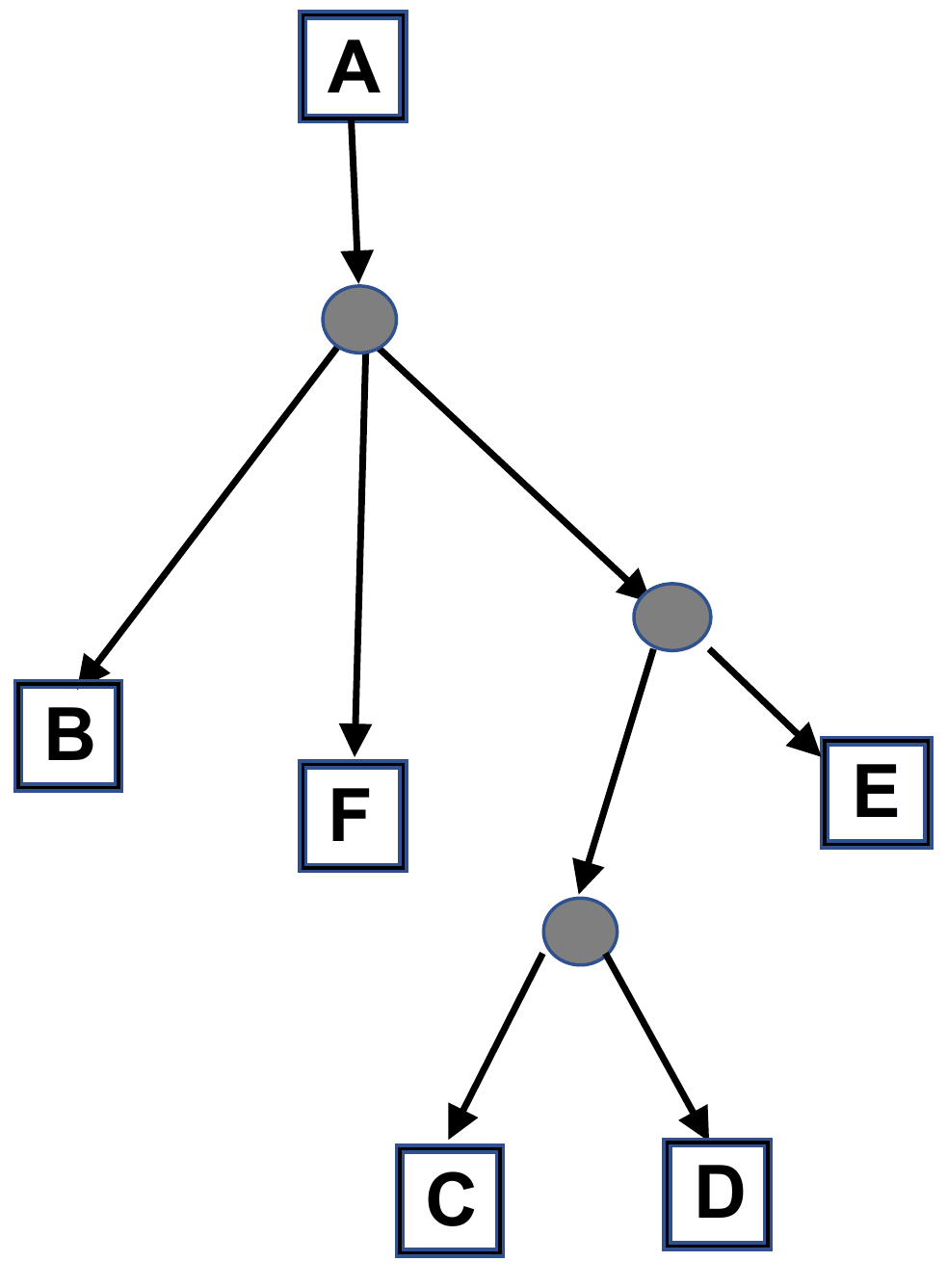}
\caption{Source tree rooted at host A}
\label{fig:srctreeA}
\vspace*{-0.15in}
\end{figure}

\subsection{Our Contributions} 
To our knowledge, our work is the first to propose, implement, and empirically validate a method for inferring the {\em complete} network topology and routing paths of  a communication network, using only path metrics. Prior work on network tomography has been limited to inferring individual source trees, how these  trees intersect, or use stronger measurement primitives that are not easily implementable. Specific contributions follow.

1) We propose a novel theoretical approach (OCCAM) that applies the Occam's razor principle \cite{Occam} to pose and solve an optimization problem to find the ``simplest'' network that obeys observations. The solution to the optimization problem yields the inferred network and routing paths. Our optimization approach is a new way of thinking about network inference and contrasts with other statistical ways of thinking about the problem known in the prior art, such as Maximum Likelihood Estimation (MLE) \cite{Duffield2002}. We prove the correctness of OCCAM by formally showing that it provides a solution that satisfies all PSM and  DM observations. 

2) We evaluate OCCAM on several real-world ISP network topologies and show that it provides high-quality inferences that agree closely with ground truth. The average network similarity score of the inferred topology with respect to ground truth is 93\%. The inferred routing paths have a small average edit distance of 0.20 from ground truth.

3) We analyzed the robustness of OCCAM's inference when a fraction of its inputs are erroneous, as would be the case if its PSM and DM inputs are derived from actual network measurements. For the networks we tested, OCCAM produced a high-quality  inference even when a random 20-30\%  of the PSMs and DMs  were erroneous.

4) To more closely simulate real-world network inference, we implemented multiple ISP topologies on the  DETER \cite{Mirkovic10the} network emulator. Using unicast packet probes, we derive PSM and DM values from packet-level measurements from DETER. Using these measured values as inputs to OCCAM, we show that it produces high-quality inferences, close to ground truth. 

5) It is well-known from prior work that source trees can be constructed with PSM inputs. It is natural to ask if the source trees  so constructed can be ``stitched'' together to create the complete network topology. This yields a variant of network inference where you are provided source trees (instead of PSMs) as the measured input, in addition to DM constraints. We show that OCCAM's optimization can be modified to perform tree stitching. In this variant, OCCAM performed similar to the original version when PSMs and DMs are provided as the measured input.

\section{The OCCAM approach}
\label{sec:theory}
The OCCAM approach infers a network $N'=(G',H,P')$ using as inputs measurements from the actual ``ground-truth'' network $N=(G,H,P)$ as follows.

\noindent {\em Measurement Inputs.}  DM metrics and relative values of the PSM metrics are measured from the hosts $H$ of $N$ using standard techniques described in Section~\ref{sec:pathmetrics} and provided as inputs. 
\begin{enumerate}
\item {\em Optimization Step.}  Network inference is formulated and solved as an optimization problem where the ``simplest'' network satisfying the observed PSM and DM constraints is produced as a solution.
\item {\em Inference Step.} The inferred network $N'=(G',H,P')$ is constructed from the solution of the optimization. 
\end{enumerate}

\subsection{The Optimization Step}
\label{subsec:opt}
The key idea of our approach is to view network inference as an optimization problem where the ``simplest'' network satisfying the observed PSM and DM constraints is produced as a solution. We view this approach as analogous to Occam's razor that is a heuristic element of the scientific method and advocates the construction of the simplest and most parsimonious model that obeys the empirical observations. We capture the existence of nodes and links, as well as the membership of links in routing paths, as indicator variables whose values are set by the optimization process\footnote{We do not know how many non-host nodes exist apriori. So, we define indicator variables for an upper bound on the number of non-host nodes and allow the optimization to decide how many such nodes actually exist by setting those indicator variables.}  (cf. Table~\ref{table:variables}). The values of these variables as set by the optimization yield the inferred network topology and routing paths.

\textbf{Objective function.} There are many notions of simplicity possible in a network setting. We use the notion that the inferred network should have (i) the smallest number of links, and (ii) the smallest total host-to-host shortest path distance. We can express the notion of simplicity as the objective function that needs to be minimized as follows.
\vspace*{-0.05in}
\begin{equation}
\label{eq:objective}
\begin{split}
\min: \;\; \alpha \; \sum_{S \in H} \sum_{T \in H} m_T^S + \;(1- \alpha) \; \sum_{i \in V} \sum_{j \in V} w_{ij},
\end{split}
\end{equation}
where $m_T^S$ is an integer variable denoting the length of the path from source host $S \in H$ to destination host $T \in H$, $w_{ij}$ is a variable indicating if a link exists between node $i$ and node $j$ in the inferred network, and $0 \leq \alpha \leq 1$ weighs the relative importance of the two components of the objective function.

\textbf{Path sharing.} For each measured relative PSM metric of the form  $PSM( S, T_1, T_2) <   PSM( S,T_2, T_3)$  we add the constraint below.

\begin{equation}\label{eq:entanglement}
\sum_{i \in V} v_{i}^{S,T_1} v_{i}^{S,T_2} < \sum_{i \in V} v_{i}^{S,T_2} v_{i}^{S,T_3},
\end{equation}
where  $v_{i}^{S,T}$ is a variable indicating if node $i$ is on the path from host $S$ to host $T$. The \textit{LHS} of the above inequality thus counts the number of nodes 
in the intersection of  paths  from $S$ to $T_1$ and $T_2$.  Similarly the \textit{RHS}  counts the number of nodes present in the intersection of paths from $S$ to $T_2$ and $T_3$.  

\textbf{Distance metrics.}
For each measured DM metric, if $DM(S,T_1) < DM(S,T_2)$, we add the constraint below.
\begin{equation}\label{eq:relative}
m_{S}^{T_1} < m_{S}^{T_2}
\end{equation}
where $m_{S}^{T}$ is an integer variable indicating the distance, in terms of number of links on the path from $S$ to $T$. When the absolute value of the DMs can be calculated accurately, the above constraint can be replaced by
\begin{equation}\label{eq:accurate-distances}
m_{S}^{T} = DM(S,T),
\end{equation}
for every pair of hosts $(S,T)$. 
However, in practice, we have observed that using the constraint above on absolute DM values increases the run time of the algorithm. So, in our experiments, we run OCCAM with the weaker constraint on relative DM values of Equation~\ref{eq:relative} and empirically show that it is sufficient to obtain a high-quality network inference.

%\footnote{\textcolor{red}{When DMs can be calculated directly, the above constraint can be replaced by setting $m_{S}^{T} = DM(S,T)$ for every pair of hosts $(S,T)$. However, in practice, we have observed that it increases the runtime of the algorithm.}}

\textbf{Source tree property.} Let $P^S \subset P$ be the set of routing paths in $G$ between host pairs $\{S\} \times H$; we add the following constraints to ensure that links belonging to $P^S$ form a tree.  
\begin{equation}\label{eq:src-tree}
\sum_i s_{i,j}^{S} \leq 1 \quad \quad  \forall j \in V  \quad S \in H
\end{equation}
where $s_{i,j}^{S}$  is a variable indicating if link $(i,j)$ is on any of the paths in $P^S$. The constraint ensures that for every node $j \in V$, the number of links in $P^S$ that terminate at node $j$, is at most 1. Thus, it ensures that there is at most one unique path to node $j$ from source host $S$.
 
\textbf{Source-oblivous paths.} Typically, a packet at a node $i \in V$ is forwarded to the next node $j \in V$ by consulting a routing table that provides the ``next-hop'' for each destination $T \in H$, independent of the packet's source. In particular, two packets arrive at a node $i$ from different sources are forwarded to the same next node $j$ if they are going to the same destination $T$.  We capture this as follows:

\begin{equation}\label{eq:dst-tree}
\sum_{j \in V} d_{i,j}^T \leq 1 \quad \quad  \forall i \in V,  \quad T \in H, 
\end{equation}
where  $d_{i,j}^T$ is an indicator variable indicating if a link $(i,j)$ is on any of the paths $P_T$, where $P_T \subset P$ is a set of routing paths in $G$ between host pairs in $H \times \{T\}$. Above equation ensures that, if a node $i$ is on any of paths to destination $T$, the number of possible forward hops is at most 1.

\begin{figure}
  \centering
  \begin{tabular}{|r|l|}\hline%                                                                                                                                                                                                          
    \bfseries Symbol & \bfseries Meaning \\ \hline  \hline                                                                                                                                                                                      
    $s_{i,j}^S$ & An indicator variable indicating \\ 
                & if link $(i,j)$ belongs to any path with host \\
                & $S$ as the source.  \\ \hline
    $d_{i,j}^T$ & An indicator variable indicating if \\ 
                & link $(i,j)$ belongs to any  path with  \\
                & host $T$ as the destination \\ \hline
    $m_j^S$ & An integer variable denoting the \\ 
            & number of hops required to reach \\
            & node $j$ from host $S$ \\ \hline
    $v_j^{S,T}$ & An indicator variable indicating if \\
                &  node $j$ is on the path from host \\
                &  $S$ to enclave $T$ \\ \hline   
     $w_{ij}$ & an indicator variable indicating if the \\ 
     		& link $(i,j)$ is present in the 
		inferred graph \\
		\hline
  \end{tabular}
  \caption{Output variables set by optimization}
  \label{table:variables}
  \vspace*{-0.15in}
\end{figure}

\textbf{Populating the $d_{i,j}^T$ variables.} Link $(i,j)$ is in $P_T$ if only if there exists a source $S$ such that both of the following hold.
\begin{enumerate}
\item Link $(i,j)$ belongs to $P^S$, i.e., $s_{i,j}^S$ is  $1$.
\item Node $j$ is on the path from $S$ to $T$, i.e., $v_j^{S,T}$ is  $1$. 
\end{enumerate}
Note that the above two conditions imply that link $(i,j)$ is in $P_T$ because the first condition implies that the path from $S$ to $j$ must go through $(i,j)$. Thus, $d_{i,j}^T$ can be set to $1$, if only  if  $\sum_{S \in H} s_{i,j}^S v_j^{S,T}$ is positive.  This can be expressed using the following constraint.
\begin{equation}\label{eq:dst-tree-populate}
\begin{split}
-M(1 - d_{i,j}^T) < \sum_{S \in H} s_{i,j}^S v_j^{S,T} \leq M d_{i,j}^T  & \quad \forall i,j \in V,
\end{split}
\end{equation}
where $M$ is a suitably large constant. Note that if $\sum_{S \in H} s_{i,j}^S v_j^{S,T} $ is zero, the first inequality above forces $d_{i,j}^T$ to be zero. Else, if $\sum_{S \in H} s_{i,j}^S v_j^{S,T} $ is positive, the second inequality above makes $d_{i,j}^T$  to be $1$.

\textbf{Constraints to calculate distances.} 
\begin{equation}\label{eq:distance}
 m_{j}^{S} = \sum_{i \in V} s_{i,j}^{S} (m_{i}^{S} + 1) \quad \forall S\in H \quad j \in V,
\end{equation}
where $m_{j}^{S}$ is the number of hops to node $j$ from source $S$. The above constraint evaluates variable $m_{j}^{S}$ by stating that if there exists an incoming link $(i,j)$ in $P^S$, i.e., if $s_{i,j}^{S} = 1$, then the value of $m_{j}^{S}$ can be computed as $m_{i}^{S} + 1$. (Note that constraint (\ref{eq:src-tree}) ensures that there is at most one such link $(i,j)$ in  $P^S$).  If there is no such link $(i,j)$ in $P^S$, i.e. if $s_{i,j}^S = 0$, then $m_{j}^{S}$ is $0$. In this case, we say that node $j$ is not on any paths in $P^S$. We initialize the variable $m_{S}^{S}$ to $0$. 

\textbf{Tracing a host-to-host path.} We add below constraints to find nodes that are on the path from host $S$ to host $T$.  These variables are used in (\ref{eq:entanglement}) to encode PSM constraints. To determine if node $i$ is on the path from $S$ to $T$, we add the following constraint.
\begin{equation}\label{eq:node-on-path}
v_{i}^{S,T} = \sum_{j \in V} v_j^{S,T} s_{i,j}^S \quad \forall S,T \in H, \forall i \in V - H
\end{equation}
Node $i$ is on the path from $S$ to $T$, if there exists a node $j \in V$ such that (i) $j$ is on the path from $S$ to $T$, i.e., ($v_j^{S,T} = 1$) and, (ii)  there exists an outgoing link from node $i$ to node $j$ in $P^{S}$. The above two conditions suffice because (\ref{eq:src-tree}) ensures that there can be at most one incoming link to $j$ in $P^S$, and if such a link exists, $i$ should necessarily be on the path from $S$ to $T$.

\textbf{Boundary conditions.} Paths in $P^S$ should always contain a outgoing link from source $S$, and an incoming link at each destination host $T \in H \setminus S$. We add the following constraints for each host $S \in H$ to ensure that the above requirement is met.
\begin{equation}\label{eq:boundary-presence-src}
\sum_{j \in V} s_{S, j}^S = 1,
\end{equation}
\begin{equation}\label{eq:boundary-presence-dst}
\sum_{j \in V} s_{j, T}^S = 1 \quad \forall T \in H \setminus S,
\end{equation}
Similarly, we need to ensure that paths in $P^S$ have no incoming link at source $S$ and no outgoing link at a destination host $T \in H \setminus S$. We add the following constraints for each host $S \in H$ to ensure that the above requirement is met.
\begin{equation}\label{eq:boundary-absence-src}
\sum_{j \in V} s_{j, S}^S = 0,
\end{equation}
\begin{equation}\label{eq:boundary-absence-dst}
\sum_{j \in V} s_{T, j}^S = 0 \quad \forall T \in H \setminus S,
\end{equation}

We also ensure that if there is an outgoing link $(j,k)$ at node $j$ in $P^S$, then there must exist an incoming link at node $j$ that is in $P^S$. 
\begin{equation}\label{eq:boundary-path-constraint} 
s_{j,k}^S \leq \sum_{i \in V} s_{i,j}^S \quad \forall j \in V \setminus S,
\end{equation}
The above equation says that if $s_{j,k}^S$ is 1, i.e.,  there exists an outgoing link $(j, k)$ at node $j$, then the term $\sum_{i \in V} s_{i,j}^S$ cannot be $0$, i.e., there must exist an incoming link at node $j$. Note that we do not write the constraint if node $j$ is the source host $S$. 

\textbf{Dealing with inaccurate measurements.} 
The PSM and DM metrics derived as inputs can sometimes be inaccurate in real world scenarios. For instance, the PSMs can be inaccurate in real-world networks when there is no multicast available, a train of unicast packet probes must be used, and there is a significant amount of background traffic. Such was the case with some of our experiments on the DETER testbed. An  variant of the optimization step that we used to tackle measurement inaccuracies is to convert the hard PSM and DM constraints in (\ref{eq:entanglement}) and (\ref{eq:relative}) into soft constraints by moving them to the objective function. That is, we add a third component to the objective function in (\ref{eq:objective}) that represents the number of PSM and DM constraints that are violated. Thus, violations of PSM and DM constraints are minimized, along with other considerations. Thus, the new objective function finds the ``simplest'' network that obeys ``most'' of the observed measurements.  As we show later, with this approach, OCCAM made accurate network inferences even when 20-30\% of the PSM constraints were incorrect.

\textbf{Discussion.} The OCCAM approach assumes that the network is ``simple'' in two different respects. First, it posits that the network  is itself ``simple'' in the sense of having the fewest number of links and the shortest distances between the hosts. This is reflected in the objective function that is minimized. Real-world network designers may not always design networks that strictly obey that notion of simplicity.  Second, OCCAM assumes that the paths on the network are also ``simple''  and posits that paths are source-oblivious and satisfy the source tree property.  Again, specific real-world routing protocols may disobey some of these properties some of the time.  

In the philosophy of science, Occam's Razor is used as an aesthetic principle for choosing the simplest theory that fits the observations. Its use has been unreasonably effective in producing sound scientific principles. Likewise, as we show in Section~\ref{sec:empirical}, even though OCCAM's quest for simplicity can produce erroneous results, it generally results in high-quality network inference. We also observe that even when some individual assumptions are violated, OCCAM can correct for the erroneous assumptions and still produce a high-quality network inference.
\vspace*{-0.05in}
\subsubsection{Solution approach} The optimization problem formed with the objective function in Equation~\ref{eq:objective} and constraints that include Equations~\ref{eq:entanglement}, ~\ref{eq:relative} and ~\ref{eq:src-tree} to \ref{eq:boundary-path-constraint} is a  \textbf{M}ixed \textbf{I}nteger \textbf{B}ilinear \textbf{P}rogram (\textit{MIBP}). For ease of using  solvers such as CPLEX, we  linearized the problem to form a \textbf{M}ixed \textbf{I}nteger \textbf{P}rogram (\textit{MIP}) as follows. 

\textbf{Linearize a product of binary variables.} Note that the constraints in Equations \ref{eq:entanglement}, \ref{eq:node-on-path} and \ref{eq:dst-tree-populate} have bilinear terms that are a product of two binary variables. We linearize each such bilinear term as follows. Consider a bilinear term of the form $xy$, where $x$ and $y$ are binary variables. Replace the term $xy$ with a new binary variable $z$ and add the three constraints: $z \leq x$, $z \leq y$ and $z \geq x + y - 1$.  The first two inequalities ensure that $z$ is $0$, if either $x$ or $y$ is $0$. The third inequality ensures that $z$ is $1$ if both $x$ and $y$ are $1$.

\textbf{Linearize the product of an integer and a binary variable.} Note that the constraints in Equation \ref{eq:distance} have bilinear terms that are a product of an integer and a binary variable. Suppose that a bilinear term has the form $ib$, where $b$ is a binary variable and  $i$ is an integer variable lower bounded by $0$ and upper bounded by  $I$. The product term $ib$ can be linearized as follows. Replace the term $ib$ with a new integer variable $z$  and add four constraints: $z \leq I b$, $z \leq i$, $z \geq i - (1 - b)I$,  and $z \geq i$. Note that if $b$ is zero, than the first inequality ensures that $z$ will be zero as well (note that the third inequality only states that z has to be greater than a negative number). On the other hand, if $b$ is 1, the first two inequalities ensures that $z \leq i$. The third and fourth inequalities ensure that $z \geq i$. Together, this ensures that $z$ equals $i$.

{\bf Using the CPLEX solver.} We use the distributed parallel MIP feature of CPLEX to solve our problem on a server cluster.  We set the relative MIP gap to 0.15, which means that CPLEX stops looking for solutions once it finds one within 15\% of the optimal. Empirically, for networks that we evaluate in this paper, we have found that a MIP gap of 0.15 produces solutions that are reasonably accurate within a run time that does not exceed 10 to 15 minutes.

\begin{algorithm}\label{al:network-construct1}
\SetAlgoLined
\caption{\textit{GRAPH-CONSTRUCT-I}}
\textbf{Input :} Set of hosts H and solution variables \;
\textbf{Output :} Network $N'=(G',H,P')$. \;
\BlankLine
Initialize: $V' \leftarrow \phi$, $E' \leftarrow \phi$, $P' \leftarrow \phi$ \;
\BlankLine
\ForEach{$(S,T) \in H \times H$} {%
	\BlankLine
	$i_0 \leftarrow T$, $k = 0$ \;
	\BlankLine
	\While{$i_k \neq S$}{
		\BlankLine
		$k \leftarrow k + 1$ \;		
		\BlankLine
		Find a node $i_k$ such that $s_{i_k, i_{k-1}}^S$ equals $1$ \; 		
	}
	$\pi(S,T) = \{ i_k, i_{k-1}, i_{k-2}, ....... , i_1, i_0 \}$ \;
	$V' = \bigcup_{k} i_k$ \;
	$E' = \bigcup_{k, k-1} (i_{k}, i_{k-1})$ \;
}
$G' \leftarrow (V', E')$ \;
$P' = \bigcup_{(S,T) \in H \times H} \pi(S,T)$ \;
\Return $N'=(G',H,P')$ \;
\end{algorithm}

\vspace*{-0.1in}

\subsection{Inference Step}
Algorithm GRAPH-CONSTRUCT-1 infers a network $N' = (G', H, P')$ using the values set to the $s_{i,j}^S$ variables in the optimization step. For a fixed source-destination pair $(S,T) \in H \times H$, a routing path $\pi(S,T)$ is inferred as follows. Starting from $T$, the while loop in lines 6-9 iteratively finds nodes to build a path towards source $S$. In each iteration $k$, a node $i_k$ is found such that $s_{i_{k}, i_{k-1}}^S$ equals 1, and the loop terminates when $i_k$ is the source host $S$. The path $\pi(S,T)$ is then constructed as the union of the links $\bigcup_{k} (i_{k}, i_{k-1})$. Each node $i_k$ and link $(i_k, i_{k-1})$ is added to graph $G'$ in Lines 11 and 12 respectively. Thus at the end of the for loop in line 13, the routing paths $P'$ and graph $G'$ is constructed.

\begin{comment}
The inferred network $N' = (G', H, P')$ can be determined from the values of the variables set by the optimization (cf. Table~\ref{table:variables}) using Algorithm \ref{al:network-construct1}. \hi{short description}
\textcolor{red}{The inferred graph $G = (V,E)$ is constructed as follows. We include a link $(i,j)$ in $E$  if and only if $s_{i,j}^S$ is set to 1, for some  $S\in H$. Further, a node $i$ is in $V$ if and only if $i$ has an incoming or outgoing edge in $E$. Finally,  to determine the set of routing paths $P$, the nodes on each routing path $\pi(S,T)$ is simply all $j \in V$ for which $v_j^{S,T}$ is set to 1. To sequence the nodes on the path $\pi(S,T)$, we can sort the nodes on the path using the distance from the source provided by $m_j^S$ variables.} 
\end{comment}

%We ran the optimization across 5 machines, each with 8 cores and 16GB RAM.

\subsection{Correctness of OCCAM}
\vspace*{-0.06in}
\begin{theorem} \label{th:psm} Given PSMs and DMs as inputs, OCCAM infers a network $N'=(G',H,P')$; such that the routing paths P' satisfy the following properties:
\begin{enumerate}
\item The set of routing paths $P'$ contains an unique acyclic path between each pair of hosts; and
\item $G'$ and $P'$ satisfies all the given PSM and DM constraints.
\end{enumerate} 
\end{theorem}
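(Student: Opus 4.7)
The proof splits naturally along the two clauses of the theorem, and the plan is to treat them in turn, with the optimization constraints providing the main mechanical input.

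For clause~(1), the goal is to show that for every ordered pair $(S,T) \in H \times H$ with $S \neq T$, the while loop in \textit{GRAPH-CONSTRUCT-I} produces a well-defined, finite, acyclic sequence from $T$ back to $S$. First, I would verify that at every iteration the predecessor $i_k$ exists and is unique: existence at $k = 1$ is immediate from the boundary constraint~(\ref{eq:boundary-presence-dst}), which forces exactly one $s^S_{j,T}$ to equal $1$; for $k \geq 2$, the previously-selected node $i_{k-1}$ already carries an outgoing link (to $i_{k-2}$), so constraint~(\ref{eq:boundary-path-constraint}) forces it to have an incoming link as long as $i_{k-1} \neq S$; and uniqueness at each step follows from the source-tree constraint~(\ref{eq:src-tree}). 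Next, acyclicity of the subgraph $\{(i,j) : s^S_{i,j} = 1\}$ would be extracted from the distance recursion~(\ref{eq:distance}) with boundary value $m^S_S = 0$: a directed cycle $j_1 \to j_2 \to \cdots \to j_p \to j_1$ would force $m^S_{j_1} = m^S_{j_1} + p$, infeasible for $p \geq 1$. Finite $V$, determinism, and acyclicity together bound the walk length by $|V|$; and constraint~(\ref{eq:boundary-path-constraint}) (together with~(\ref{eq:boundary-absence-src})) guarantees the walk can only stop at $S$, since any other node visited must have an incoming link on which the walk can be continued.

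For clause~(2), it suffices to unpack that the optimization variables $m^S_T$ and $v^{S,T}_i$ genuinely describe the reconstructed path $\pi(S,T)$. On the distance side, equation~(\ref{eq:distance}) together with $m^S_S = 0$ and the uniqueness of predecessors from clause~(1) inductively yields $m^S_T = |\pi(S,T)| - 1$, so constraint~(\ref{eq:relative}) transfers directly into a constraint on the lengths of the inferred paths. On the PSM side, starting from the boundary values $v^{S,T}_S = v^{S,T}_T = 1$ and $v^{S,T}_X = 0$ for any other host $X$, a backwards induction along the walk of clause~(1) through equation~(\ref{eq:node-on-path}) gives $v^{S,T}_i = 1$ exactly when $i$ lies on $\pi(S,T)$. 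The sums in equation~(\ref{eq:entanglement}) then literally count nodes in the intersection of pairs of inferred paths, so the inferred $G'$ and $P'$ satisfy every PSM constraint.

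The hard part, I expect, will be the acyclicity argument and the tight identification of $v^{S,T}_i$ with path membership. For acyclicity, one has to argue carefully that equation~(\ref{eq:distance}) rules out cycles even though it admits the branch $m^S_j = 0$ when $\sum_i s^S_{i,j} = 0$; the key observation is that any node on a hypothetical cycle has an incoming link from its predecessor in the cycle, hence falls in the nonzero branch, so the telescoping sum around the cycle actually fires and produces the contradiction. For the $v$-indicators, equation~(\ref{eq:node-on-path}) is written only for non-host $i$, so the base values at hosts ($v^{S,T}_S$, $v^{S,T}_T$, and the values at other hosts lying off the path) must be supplied as explicit MIP boundary conditions or shown to be forced by the remaining constraints before the induction along the walk goes through cleanly.
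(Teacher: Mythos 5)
Your proposal is correct and follows essentially the same route as the paper's own proof: acyclicity of $\{(i,j): s^S_{i,j}=1\}$ via the telescoping contradiction in the distance recursion, existence and uniqueness of predecessors from the boundary and source-tree constraints to drive the walk in \textit{GRAPH-CONSTRUCT-I} back to $S$, and inductive identification of $m^S_T$ and $v^{S,T}_i$ with path length and path membership so that the DM and PSM constraints transfer to $P'$. The two subtleties you flag (the zero branch of the distance equation and the host boundary values for $v^{S,T}_i$) are genuine but minor tightenings of the same argument, not a different approach.
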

\begin{pfsketch}
We first show that OCCAM infers an unique acyclic routing path $\pi(S,T)$ between every pair of hosts $(S,T) \in H \times H$. Constraints in (\ref{eq:boundary-presence-dst}) ensures there exists a link $(j,T)$, for some $j \in V$, such that $s_{j,T}^S$ equals 1. Link $(j,T)$ is on the path $\pi(S,T)$. Now constraints in (\ref{eq:boundary-path-constraint}) ensures that for link $(j,k)$, if $s_{j,k}^S = 1$, then there exists an incoming link $(i,j)$ such that $s_{i,j}^S =1$, unless  $j$ is the source host $S$. Link $(i,j)$ is in $\pi(S,T)$. Thus link $(j, T)$ triggers the formation of a path begins at source $S$ and terminates at $T$. The path is acyclic as it would otherwise violate constraints in (\ref{eq:distance}). Now constraints in \ref{eq:distance} ensure $m_{S}^{T}$ equals the length of path $\pi(S,T)$ and $v_{i}^{S,T}$ equals $1$ only if node $i$ is on the path $\pi(S,T)$. Thus constraints in \ref{eq:entanglement} and \ref{eq:relative} ensure the PSM and DM constraints are satisfied. We provide the complete proof in Appendix ~\ref{app:correctness}.
\end{pfsketch}

The above shows that the output $N'$ of OCCAM obeys all the PSM and DM constraints, but it is theoretically possible that there are other optimal solutions that are different from $N'$. Further, it is also possible that the ground truth differs from $N'$ because it may not be a network that minimizes the objective function. We empirically show in Section~\ref{sec:empirical} that OCCAM produces a network that is very similar to the ground truth, though always not the same. However, for specific classes of networks, OCCAM provably produces the ground truth. We show below  that if the ground truth network is a tree then there is exactly one optimal solution and OCCAM's output exactly corresponds to the ground truth network. 

\begin{theorem} Let the PSMs and DMs be derived from a ground truth network  $N$ that is a tree.  Given the PSMs and DMs as input, OCCAM's output is the ground truth network $N$.
\end{theorem}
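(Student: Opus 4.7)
The plan is to prove the theorem in two stages: (i) the ground truth tree $N$ is a feasible solution to OCCAM's optimization, and (ii) only configurations isomorphic to $N$ attain the minimum objective, so that Algorithm \textit{GRAPH-CONSTRUCT-I} recovers $N$.

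For feasibility I would instantiate the optimization variables directly from $N$. Because $N$ is a tree, the unique acyclic path between every pair of hosts supplies natural values for $s_{i,j}^S$, $v_i^{S,T}$, $d_{i,j}^T$, and $m_j^S$, with $w_{ij}=1$ exactly for links in $E(N)$. The source-tree constraint~(\ref{eq:src-tree}) and the source-oblivious constraint~(\ref{eq:dst-tree}) both hold because in a tree paths do not merge and forward routing from any internal node to a destination is source-independent; the boundary and recurrence constraints~(\ref{eq:distance}),~(\ref{eq:node-on-path}),~(\ref{eq:boundary-presence-src})--(\ref{eq:boundary-path-constraint}) follow from the tree structure; and~(\ref{eq:entanglement}),~(\ref{eq:relative}) hold by hypothesis since the PSMs and DMs are derived from $N$.

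For optimality and uniqueness, I would appeal to the classical source-tree inference result (Duffield et al., Bu et al.): the relative orderings of PSMs measured at a host $S$ determine the source tree rooted at $S$ uniquely. Since $N$ is itself a tree, its source tree rooted at any host $S$ is $N$ viewed with $S$ as root, using $|V(N)|-1$ edges. Hence in any feasible $N'=(G',H,P')$, each source tree $P'^S$ is combinatorially isomorphic to $N$ rooted at $S$ and uses $|V(N)|-1$ edges, which already pins the distance sum $\sum_{S,T} m_T^S$ in~(\ref{eq:objective}) to its value in $N$. The source-oblivious constraint~(\ref{eq:dst-tree}) forces all source trees to share a coherent edge set, so the union $\bigcup_S P'^S$ equals the edge set of $G'$; minimizing $\sum w_{ij}$ then forces this union to have exactly $|V(N)|-1$ edges, achievable only when the source trees coincide as unrooted trees and $G'\cong N$. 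The inference step of Algorithm \textit{GRAPH-CONSTRUCT-I} thus returns $N$.

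The main obstacle is applying source-tree uniqueness cleanly under OCCAM's formulation, which supplies only strict inequalities between PSMs rather than equalities. Two trees can induce the same strict PSM orderings while differing by the insertion, deletion, or merging of internal degree-2 nodes. I would resolve this by combining the DM inequalities~(\ref{eq:relative}) with the link-minimization term in the objective: inserting a spurious degree-2 node strictly increases both the link count and at least one host-to-host distance, while merging true degree-2 internal nodes of $N$ violates at least one DM ordering derived from $N$. Together these rule out every alternative, leaving $N$ as the unique minimizer.
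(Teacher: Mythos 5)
Your feasibility step matches the paper's, but your optimality/uniqueness step has a genuine gap, and it sits exactly where you flag the ``main obstacle.'' The paper does not attempt to prove the theorem under the relative distance constraints (\ref{eq:relative}); its proof explicitly runs OCCAM with $\alpha=0$ and with the \emph{absolute} distance constraints (\ref{eq:accurate-distances}), so that the entire host-to-host distance matrix of $N$ is imposed on the feasible region, and then invokes Theorem 6 of Hakimi and Yau \cite{hakimi}: a distance matrix realizable by a tree has a unique minimum-link realization, namely that tree. Your proposed repair --- that collapsing a true degree-2 internal node of $N$ must violate some relative DM ordering --- is false. The orderings in (\ref{eq:relative}) are preserved under any modification that shortens paths without reversing a strict inequality. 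For instance, if $N$ is a star with three hosts each attached to the center by a length-2 arm, every host-to-host distance is equal, so no strict DM ordering is generated at all, and likewise no strict PSM ordering; the degree-2 nodes can all be contracted to give a strictly simpler feasible network, so the minimizer is not $N$. More generally, whenever consecutive distinct distance values from a source differ by more than one hop, a degree-2 node can be deleted without violating (\ref{eq:relative}), and the PSM constraints (\ref{eq:entanglement}) cannot rescue this because they too are only relative and do not fix the number of links on a shared segment. So under the hypotheses you are actually using, the conclusion does not hold.

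A secondary, smaller issue: the classical source-tree uniqueness result you invoke recovers only the \emph{logical} branching structure rooted at each host, not the segment lengths, so even granting it you would still need absolute distances to pin down the metric tree --- which is precisely what the paper's route supplies. If you restate your argument with constraint (\ref{eq:accurate-distances}) in place of (\ref{eq:relative}), drop the source-tree machinery, and replace your uniqueness argument with the Hakimi--Yau theorem applied to the feasible region, you recover the paper's proof.
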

\begin{proof} Our proof builds on the main theorem of Hakimi and Yau  \cite{hakimi}. Given a graph, its distance matrix $D$ provides the shortest distance between each pair of external nodes (i.e., hosts) in the graph. {\em Theorem 6} in \cite{hakimi} shows that if there exists a tree that satisfies a distance matrix $D$, then the graph with smallest number links that satisfies $D$ is unique and equals the tree. 

Suppose that our ground truth network $N$ is a tree. Further, suppose that we run OCCAM with its objective function set to minimizing the total number of links (by setting $\alpha = 0$ in Equation~\ref{eq:objective}) and with the absolute distance constraint (Equation \ref{eq:accurate-distances}). OCCAM outputs the graph with the smallest number of links in the feasible region defined by its constraints.  Since $N$ is the ground truth, $N$  satisfies all constraints considered by OCCAM, including all the absolute distance constraints, i.e., $N$ is in the feasible region. From  Hakimi and Yau, we know that the graph with the minimum number of links in the feasible region is unique and, hence, must equal $N$. So, OCCAM correctly outputs $N$.
\end{proof}
%\subsection{Hardness}
%\begin{theorem} The problem of inferring a network  $N'=(G',H,P')$ with the least number of links that satisfies the given PSMs and DMs is NP-%Complete.
%\end{theorem}
%\begin{pfsketch}
%\end{pfsketch}

 %We first show that for a source $S \in H$, the tree constructed using the links in $P^S$ satisfy all the PSM and DM constraints. We will show that $P^S$ is unique and the links in $P^S$ would have a one-one correspondence with the links in $G$.  This shows that  To show that OCCAM infers no more links, we will show that links in $P^S$ are necessary to satisfy all the DM constraints and are also sufficient to satisfy the PSM constraints. Thus, any feasible network should contain at least all the links in $P^S$. And as OCCAM infers a network that contains minimum number of links which satisfies the PSMs and DMs, the inferred network contains only the links in $P^S$. Thus OCCAM infers a network that satisfies all the constraints and is equivalent to the ground truth network $N=(G,H,P)$. We provide complete proof in Appendix~\ref{app:correctness}.

\section{Empirical results}
\label{sec:empirical}
We use several real-world  networks obtained from topology-zoo \cite{Knight2011}  to evaluate OCCAM (see Table~\ref{table:networks}). To judge the quality of the network inference produced by OCCAM, in Section~\ref{sec:quality}, we develop metrics that can compare two networks and quantify its similarity.  Later, we outline two types of experiments and results.

\begin{figure}[!htb]
\centering
\includegraphics[width=0.8\linewidth, height=40mm]{./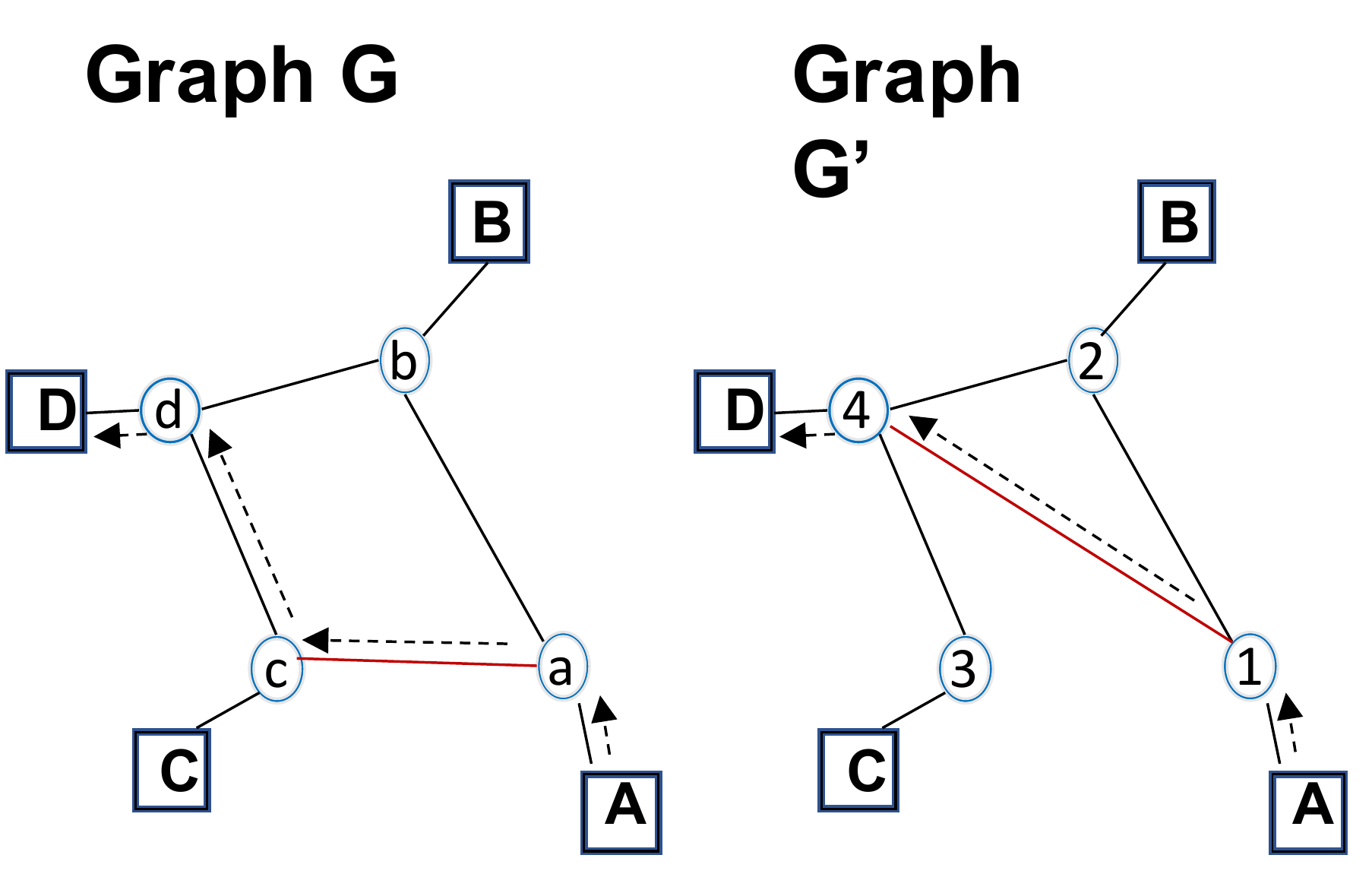}
\caption{Graph example to illustrate NS score and PED}
\label{fig:metricexample}
\vspace*{-0.2in}
\end{figure}

\begin{figure}[!htb]
  \centering
  \begin{tabular}{|r|l|}\hline%                                                                                                                                                                                                          
    \bfseries Topology & \bfseries Description \\ \hline  \hline                                                                                                                                                                                      
   ATT & Backbone network of a major  \\ 
   	& US ISP. \\ \hline
   Tata & Backbone network of a major \\
   	&  Indian ISP \\ \hline
   Bandcon & Content delivery service provider \\ \hline
   Colt & A network providing high \\ 
   	& bandwidth and voice services \\
   	& Europe, Asia and North America. \\ \hline
   Columbus & TV, telephone and broadband ISP \\ 
   	& in the Caribbean \\ \hline
   Dfn & A popular ISP in Oregon, USA \\ \hline
   Evolink & Widely used ISP in Europe \\ \hline
   Rnp & a nation-wide Internet network \\
   	&  infrastructure for the academic \\ 
	& community at Brazil. \\ \hline
   Sanet & Academic network of  national \\
   	& research and education networking \\
	& organisation of Slovakia \\ \hline
   Sinet & Security innovation network, focused \\
   	& on supporting entrepreneurial \\ 
	& companies that \\
	& build cybersecurity solutions. \\ \hline
   Surfnet & SURF an Internet Provider that offers \\ 
   	& students, lecturers and scientists in the \\
	& Netherlands \\ \hline
   6et1 &A hand crafted topology used in the initial \\
   	& stages to test OCCAM \\ \hline
   \end{tabular}
  \caption{A sub-network of the above real-world networks were used for evaluating OCCAM.}
  \label{table:networks}
\end{figure}

\subsection{Quality metrics for network inference} 
\label{sec:quality}

Given a communication network $N=(G,H,P)$ and an inferred network $N'=(G',H,P')$, we introduce two metrics below that quantitatively measure the quality of inference.

\subsubsection{Network Similarity (NS)} The {\em NS} score measures how close the inferred graph $G' = (V', E')$ is to the ground truth of $G = (V,E)$. Intuitively, we compute the ``best''  one-to-one mapping $\phi: V \rightarrow V'$ to match the vertices of one graph with the vertices of the other\footnote{Since G and G' have the same hosts, $\phi$ maps hosts in $V$ to the corresponding hosts in $V'$. If $V$ and $V'$ have different sizes, some nodes in the larger set are left unmapped.}. We then compute the percentage of links that are matched under $\phi$, i.e., percentage of links present in both graphs. Formally, 
%$$ NS(G,G') = \max_{\phi: V \rightarrow V'} \left  (\frac{\sum_{i,j \in V \times V} E_{i,j} \wedge E'_{\phi(i),\phi(j)}} {\sum_{i,j \in V \times V} E_{i,j} \vee E'_{\phi(i), \phi(j)}} \right) \times 100,$$
\begin{multline}
NS(G,G') =   \\ \max_{\phi: V \rightarrow V'}  \left  (\frac{100 \times \sum_{i,j \in V \times V} E_{i,j} \wedge E'_{\phi(i),\phi(j)}} {|E| + |E'| - \sum_{i,j \in V \times V} E_{i,j} \wedge E'_{\phi(i), \phi(j)}} \right),
\end{multline}
where $E_{i,j}$ (resp., $E'_{\phi(i),\phi(j)})$ are indicator variables that is set to $1$ if the corresponding link is present in $G$ (resp., $G'$) and $0$ otherwise, $\wedge$ is the boolean AND operator, and $\vee$ is boolean OR operator. Note that the numerator evaluates the number of links that are in common between the graphs and the denominator is the total number of links present in either graph. Note that the when $G$ and $G'$ are identical, the NS score is a 100\%. Where as if $G$ and $G'$ have complimentary links, no links match and the NS score is 0\%. In general, NS score is a measure of network similarity with values between these two extremes.

An example, Figure \ref{fig:metricexample} shows two graphs $G = (V,E)$ and  $G'=(V',E')$. To evaluate $NS(G,G')$, we first find the one-to-one mapping $\phi: V \rightarrow V'$ that maximizes the matched links\footnote{In general, finding $\phi$ to maximize the NS score is itself a computationally hard program that is related to the graph isomorphism problem for which no polynomial time algorithm is known.  However, for our specific evaluations, we exhaustively searched one-to-one mappings, $V \rightarrow V'$, and choose the mapping with the best NS score. Optimizing the evaluation process itself is beyond the scope of our work.
}. In our case, $\phi =  \{(a, 1), (b,2), (c,3), (d,4)\}$.  Under the mapping, we see that all links, except link $(a,d) \in E$ and $(1,3) \in E'$, can be matched.  Thus, the numerator in the NS score that corresponds to the total number of matched links is $7$. And, the denominator in the NS score corresponds to the union of links in $G$ and $G'$ under the mapping $\phi$, which evaluates to $9$. Thus, NS(G,G') is $77\%$.

 \subsubsection{Path Edit Distance} The {\em PED} metric for path sets $P$ and $P'$ is the average path edit distance between the corresponding paths in $P$ and $P'$. Note that given the one-to-one function $\phi$, each path $\pi \in P$ has a corresponding path $\pi' \in P'$ such that the two corresponding paths connect the same host pairs under $\phi$. Path edit distance between two paths $\pi$ and $\pi'$ is simply the number node insertions, deletions and substitutions required to convert one path to the other.  The overall PED is simply the average PED of the individual path pairs.
 
 As an example, we show the PED calculation for the path from host $A$ to host $D$ in Figure \ref{fig:metricexample}. The path $P(A,C)$ in $G$ is $\{A, a, d, c, C\}$, and the path $P'(A,C)$ is $\{A, 1, 3, C\}$. Under the mapping, $\phi \rightarrow V \times V' : \{(a, 1), (b,2), (c,3), (d,4)\}$, the path $P(A,D)$ can be rewritten as $\{A, 1, 4, 3, D\}$, which is at an edit distance of $1$ from $P'$, since a single edit of the deletion of node $4$ is required.

\begin{figure}[!thb]
\centering
\includegraphics[width=0.9\linewidth, height=42mm]{./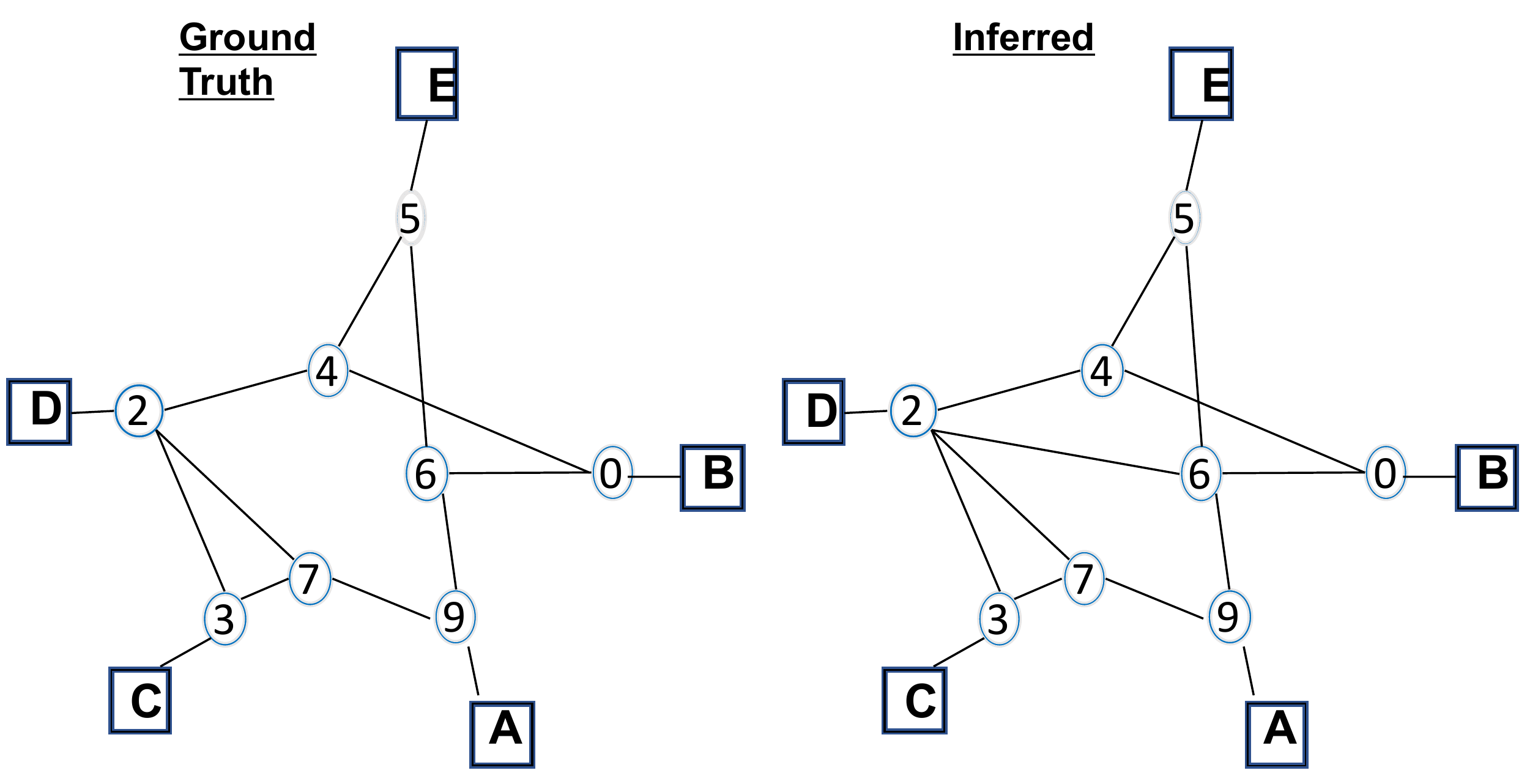}
\caption{AT\&T network inferred with an network similarity of over 93\%.}
\label{fig:att}
\vspace*{-0.1in}
\end{figure}

\begin{figure}[!thb]
\centering
\includegraphics[width=0.95\linewidth, height=50mm]{./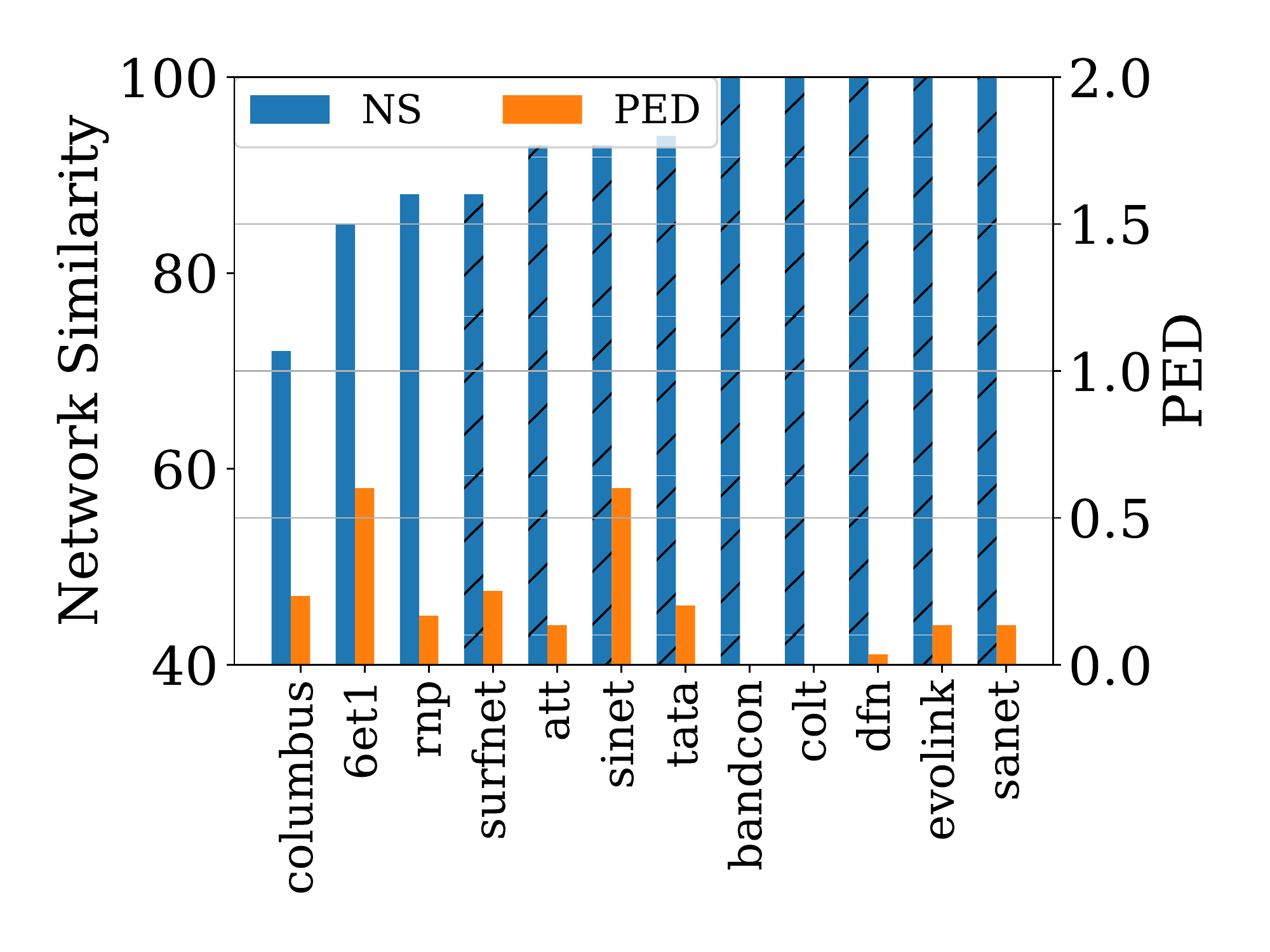}
\caption{Network Inference with both PSM's and DM's as  input}
\label{fig:dm_psm}
\end{figure}

\begin{figure}[!htb]
\centering
\includegraphics[width=0.9\linewidth, height=45mm]{./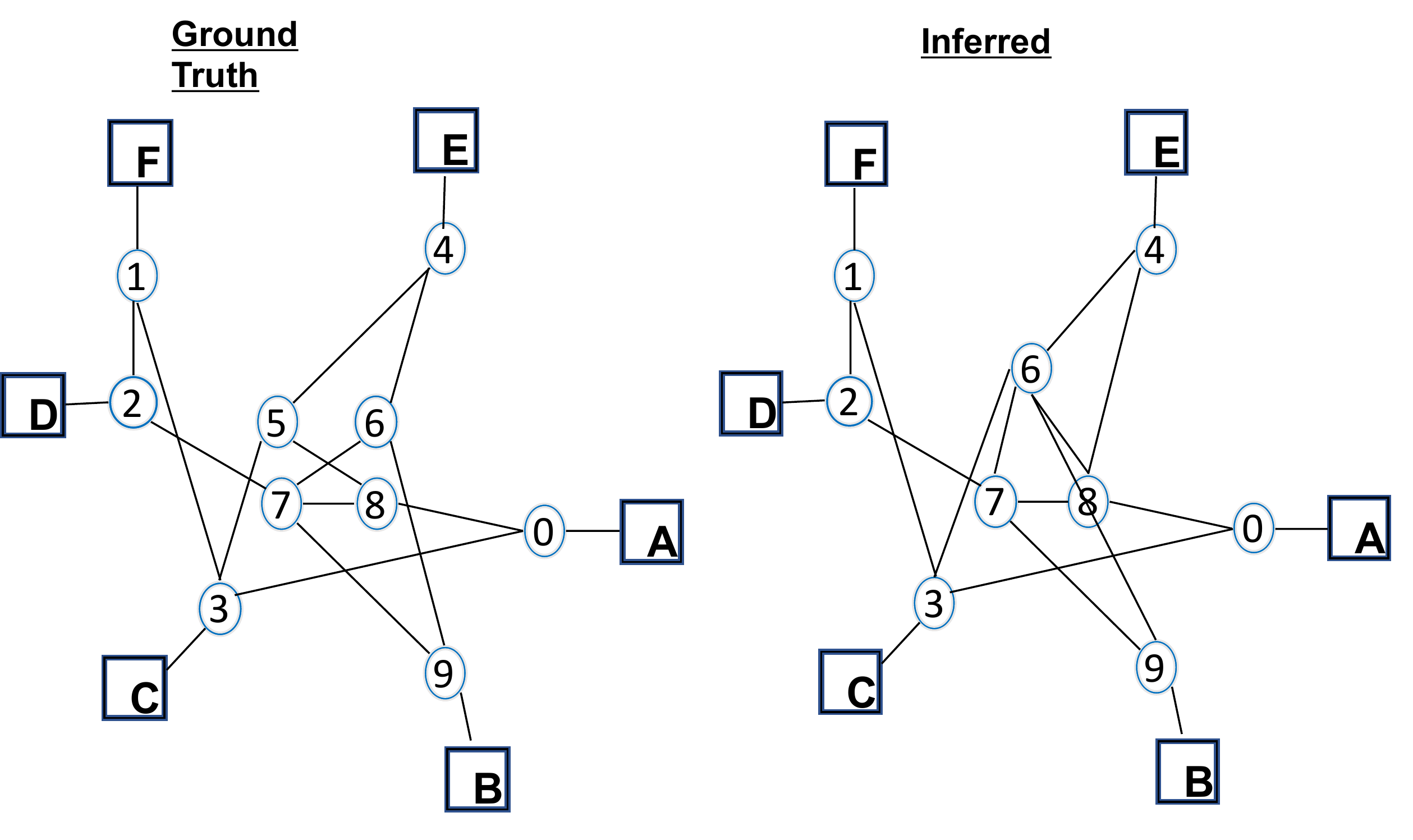}
\caption{COLUMBUS network inferred with an network similarity of 75\%.}
\label{fig:columbus}
\end{figure}

\begin{figure}[!htb]
\centering
\includegraphics[width=0.9\linewidth, height=45mm]{./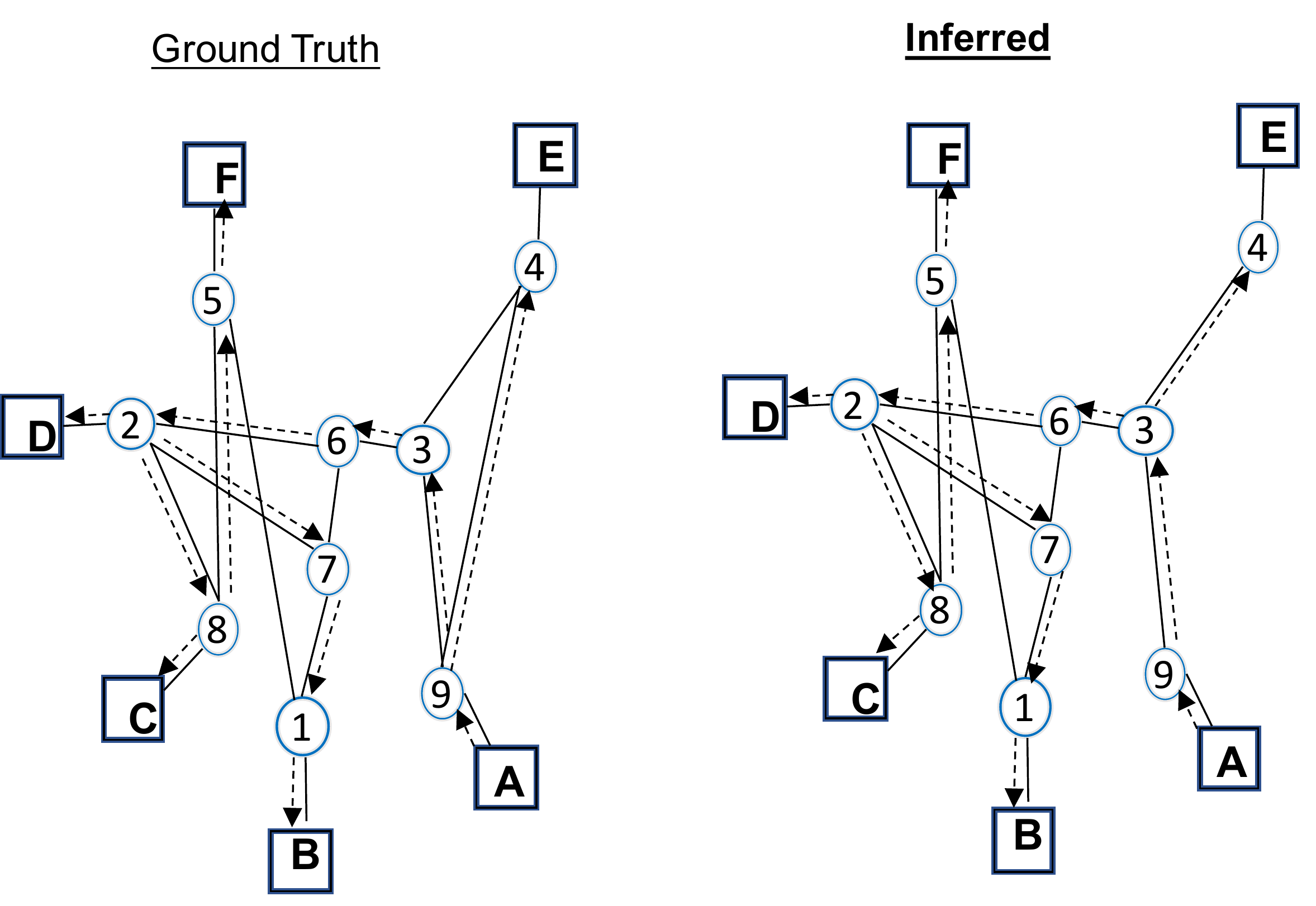}
\caption{TATA network inferred with an network similarity of over 94\%. The routing paths from source A to the hosts are also shown.}
\label{fig:tata}
\end{figure}

\subsection{Measurements from ground truth}
\label{sec:synthetic}

In the first set of experiments described in this section, we create the ground-truth communication network $N=(G,H,P)$ by picking a real-world network from topology-zoo (see Table~\ref{table:networks}). To simulate the situation where an enterprise has a set of hosts attached to the real-world network,  we choose a set of nodes randomly from the real-world network and attached a host to each of these nodes. To create the routes $P$, we find paths between the hosts   by computing shortest paths between every pair of hosts using Dijkstra's algorithm in a manner similar to OSPF \cite{Moy}.  The graph $G$ is simply the set of nodes and edges used in one or more of the shortest paths in $P$. Now that the ground-truth network $N$ is constructed, the measurement inputs to OCCAM are derived by computing the PSM and DM metrics from the ground truth $P$. Thus, this set of experiments model the situation where the measurement inputs to OCCAM have no errors, and only the ability of OCCAM to perform the optimization and inference is evaluated.
 
The ground truth and the inferred topology for AT\&T is shown in Figure~\ref{fig:att}. The inference is accurate with a network similarity (NS)  score of $93.75\%$, with the only error being an extra link (6,2) in the inferred graph not present in the original. The path edit distance (PED) was $0.6$, denoting the paths were also inferred accurately requiring only a small number of edits to make the inferred path identical to the corresponding path in the ground truth.

Figure \ref{fig:dm_psm} shows the overall performance of OCCAM across multiple networks. As can be seen, for a few networks we obtain a perfect inference, i.e., these networks received an NS score of 100\% and a PED of $0$. This means that both the inferred graph topology and the paths completed agreed with ground truth. Across the $12$ networks tested, we obtain an average NS score of 93\%. The average $PED$ score of $0.20$, which means that the average number of edits needed to make an inferred path identical to the same path in ground truth is $0.20$. Thus, OCCAM provides a highly accurate inference of the network, given accurate PSM and DM inputs from ground truth.

Beyond numerical measures, it is instructive to visualize the inferred networks themselves in relation to the ground truth in the cases where the inference was not perfect.   Figure \ref{fig:columbus} shows OCCAM's output for the COLUMBUS network that received one of the lower NS scores.  However, the inferred network and the ground truth have a very similar topological structure, except that internal nodes 5 and 6 in the ground truth are merged into one node (node 6) in the inferred graph. The merged internal node error is common since OCCAM attempts to find the ``simplest'' network that obeys the PSMs and DMs, resulting in OCCAM positing fewer internal nodes. Note that OCCAM does not infer a network with even fewer internal nodes, e.g., only one internal node instead of 5, 6, 7, and 8 in the ground truth, as such an inference will violate the DM and possibly some PSM constraints.  

As another example, Figure~\ref{fig:tata} shows the inferred network and ground truth for the TATA network.  OCCAM produces a nearly identical topology, except that the link $(9,4)$ is omitted in the inferred graph. The source tree rooted at $A$ is identical between the two graphs, though the path from $A$ to $E$ is longer in the inferred graph by one link. The reason for OCCAM's inference can be understood by the fact that its objective function in Equation~\ref{eq:objective} is a weighted sum of the number of links and shortest path distances. Since the shortest path between only one host-pair is impacted by not creating $(9,4)$ and since OCCAM was run with $\alpha = 0.2$ that favors link reduction over distance reduction, it chose not infer link $(9,4)$. Note that all the DM constraints are still met without  $(9.4)$, so the inferred network still meets all PSM and DM inputs.

\vspace*{-0.1in}
\subsubsection{Using DM inputs only} 
To observe the value of the DM inputs, we run OCCAM with only the DM constraints, without any PSM constraints. As shown in  Figure \ref{fig:dm}, the DM inputs by themselves provide an NS score of around 85\%. We also see that DM inputs are sometimes fully sufficient to obtain an accurate inference. For instance, on networks COLT, EVOLINK, SANET and SINET we obtain a $100\%$ score on the NS metric. In a few other networks, such as ATT and DFN, supplementing DMs with PSMs improves the inference significantly. For instance, NS score of the AT\&T network improved from 78\% to 94\%. Also, using just DMs, OCCAM infers the right number of internal nodes for 9 out of the 12 networks. Further, using the DMs alone provided an average PED of $0.43$ across the $12$ networks that we tested as compared to $0.20$ when both PSM and DM inputs are used.  In conclusion, DMs by themselves provide powerful constraints for network inference, though in several cases the PSMs improve inference quality.

\begin{figure}[!htb]
\centering
\includegraphics[width=0.95\linewidth, height=50mm]{./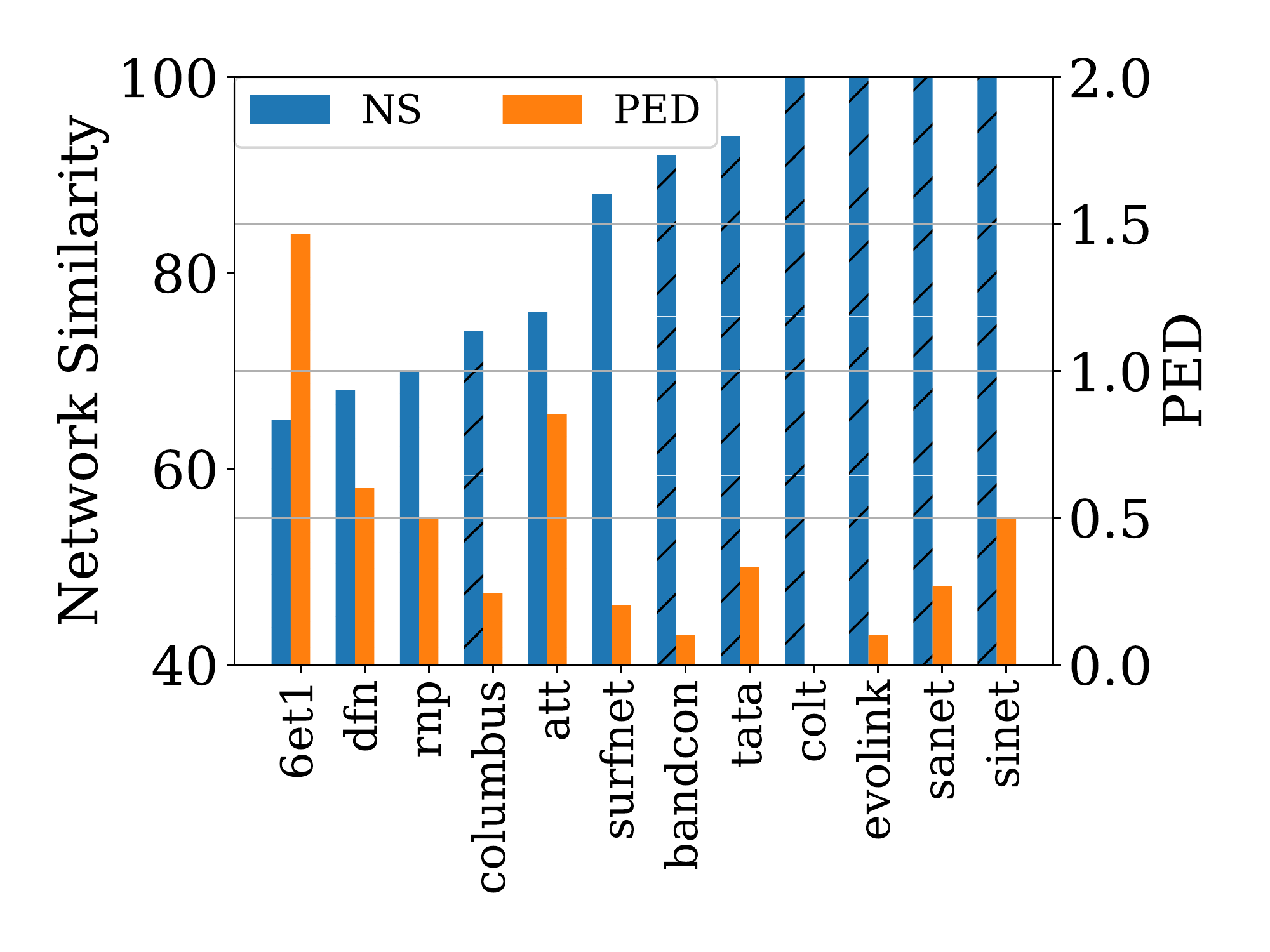}
\caption{Network Inference using only DMs}
\label{fig:dm}
\end{figure}

\begin{figure}[!htb]
\centering
\includegraphics[width=0.9\linewidth, height=38mm]{./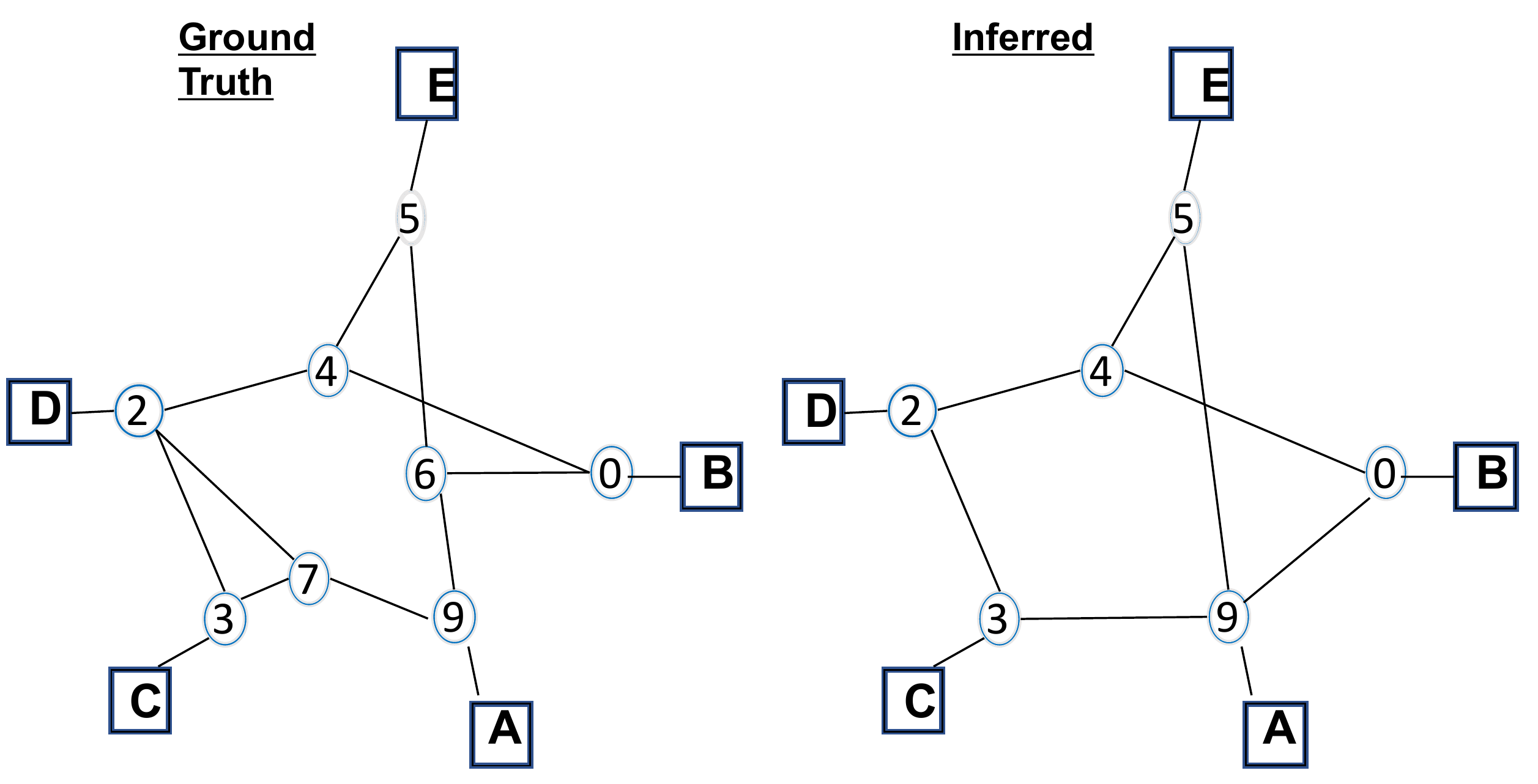}
\caption{AT\&T network inferred with only PSM inputs.}
\label{fig:attpsm}
\vspace*{-0.1in}
\end{figure}
\subsubsection{Using PSM inputs only} 
\label{sec:psmonly}
To observe the value of PSM metrics, we ran OCCAM with PSM inputs alone. Without the distance information from the DMs, we observe that OCCAM does not always guess the number of internal nodes correctly. Of the 12 networks, OCCAM with PSM alone produced the right number of internal nodes in only  5 cases, significantly less than OCCAM with DM inputs alone.  Figure~\ref{fig:attpsm} provides the OCCAM's inference for the AT\&T network with PSM inputs only. Unlike the case when both PSM and DM inputs are present (see Figure~\ref{fig:att}) where OCCAM deduced the right number of internal nodes, two pairs of nodes in the ground truth (nodes 7,3 and 6,9) are collapsed to a single node each in the inferred network in Figure~\ref{fig:attpsm}.  This example shows that DMs provide information for inferring internal nodes that cannot be inferred from PSMs alone. 

If we allow the collapse of internal nodes in the ground truth, OCCAM with PSM inputs does produce a high-quality inference of the network. To illustrate this point, we allowed up to two pairs of internal nodes to be collapsed in the ground truth graph before evaluating the NS and PED metrics. Figure~\ref{fig:psm} shows the NS score and PED values after allowing up to two pairs of nodes to be collapsed in the ground truth, where we choose the best pairs to collapse so as to optimize the NS and PED values. We observe that across 12 networks, we obtain a $NS$ score of 83\% and a PED of $0.44$. Thus, besides the error of collapsing internal nodes, OCCAM with PSMs alone can produce high-quality inferences. 

\begin{figure}[!htb]
\centering
\includegraphics[width=0.95\linewidth, height=50mm]{./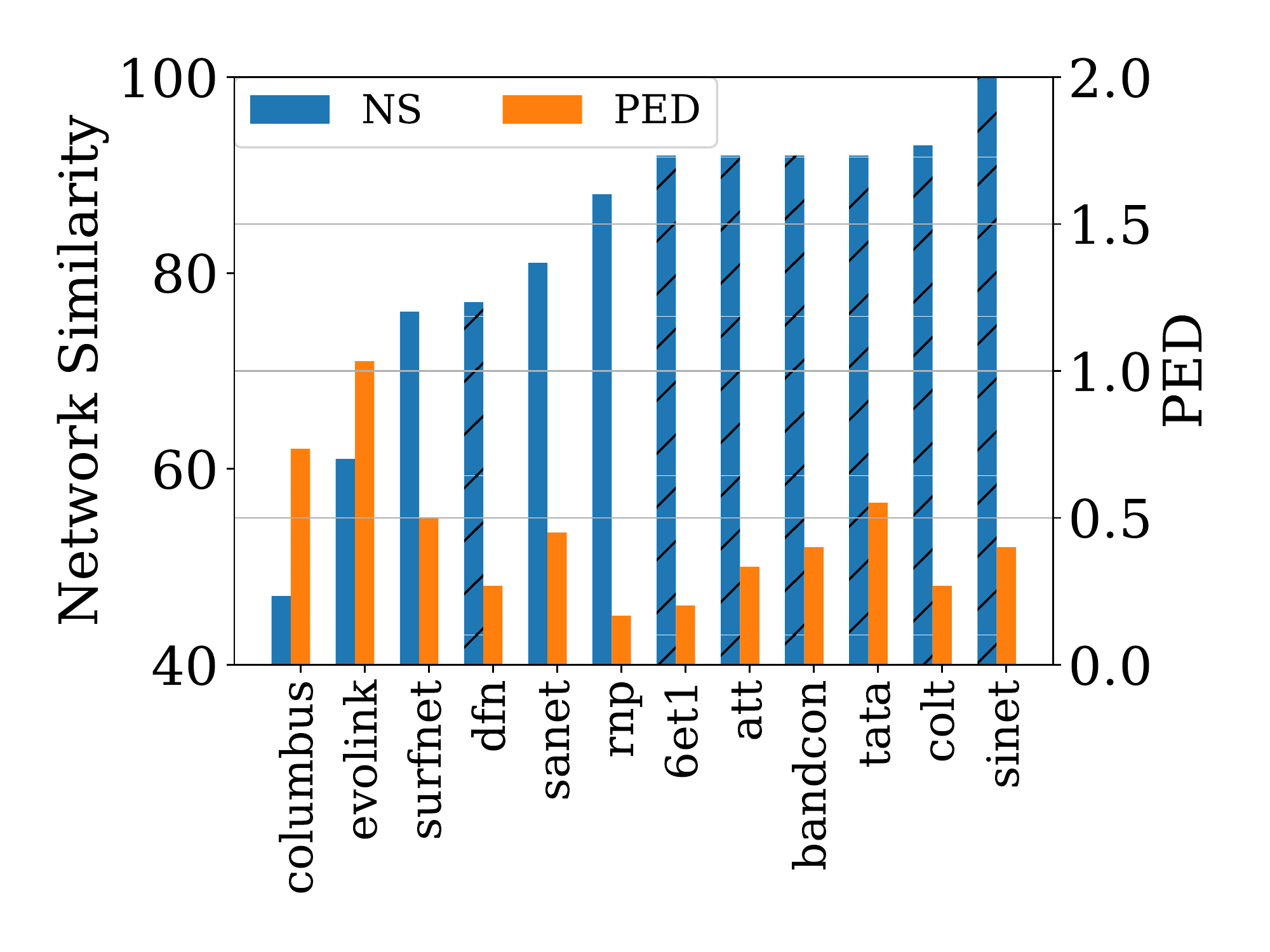}
\caption{Network Inference using only PSMs}
\vspace*{-0.2in}
\label{fig:psm}
\end{figure}

\subsection{Measurements with random errors}
Thus far, the PSMs and DMs derived from ground truth had no errors. However, when PSMs and DMs are derived from actual packet-level measurements in a real-life scenario, we expect some of them to be erroneous. Here, we study robustness of OCCAM's inference to erroneous PSM and DM inputs. We chose three network topologies (AT\&T, SANET, and BANDCON). As before, the measurement inputs to OCCAM are accurate PSM and DM metrics from ground truth. However, to introduce an error with probability $p\%$ in the relative PSM measurements, we chose each PSM constraint of the form shown in (\ref{eq:entanglement}) and flipped the LHS and RHS of that constraint with probability $p\%$. Likewise, we also flip the LHS and RHS of each DM constraint of the form shown in (\ref{eq:relative}) with probability $p\%$.  

Since these errors could introduce inconsistencies in the constraints leading to infeasibility, we use the PSM and DM constraints as ``soft'' constraints that are made a part of the objective function, as described in Section~\ref{subsec:opt}.  So, OCCAM finds the ``simplest'' network that satisfies as many (but not necessarily all) of the PSM and DM constraints as possible. Figures~\ref{fig:noisy_data} and \ref{fig:noisy_data_ped} show the quality of OCCAM's inference for the three networks with increasing error probability. As expected, the NS score decreases with increasing error. However, the NS scores are surprisingly good for the three networks, even in the presence of 20-30\% random input errors. Likewise, PED stays below $1$ for up to 40\% errors. Thus, the OCCAM approach allows accurate inputs to compensate for the incorrect ones to maintain a high inference quality.

\begin{figure}[!htb]
\centering
\includegraphics[width=0.9\linewidth, height=41mm]{./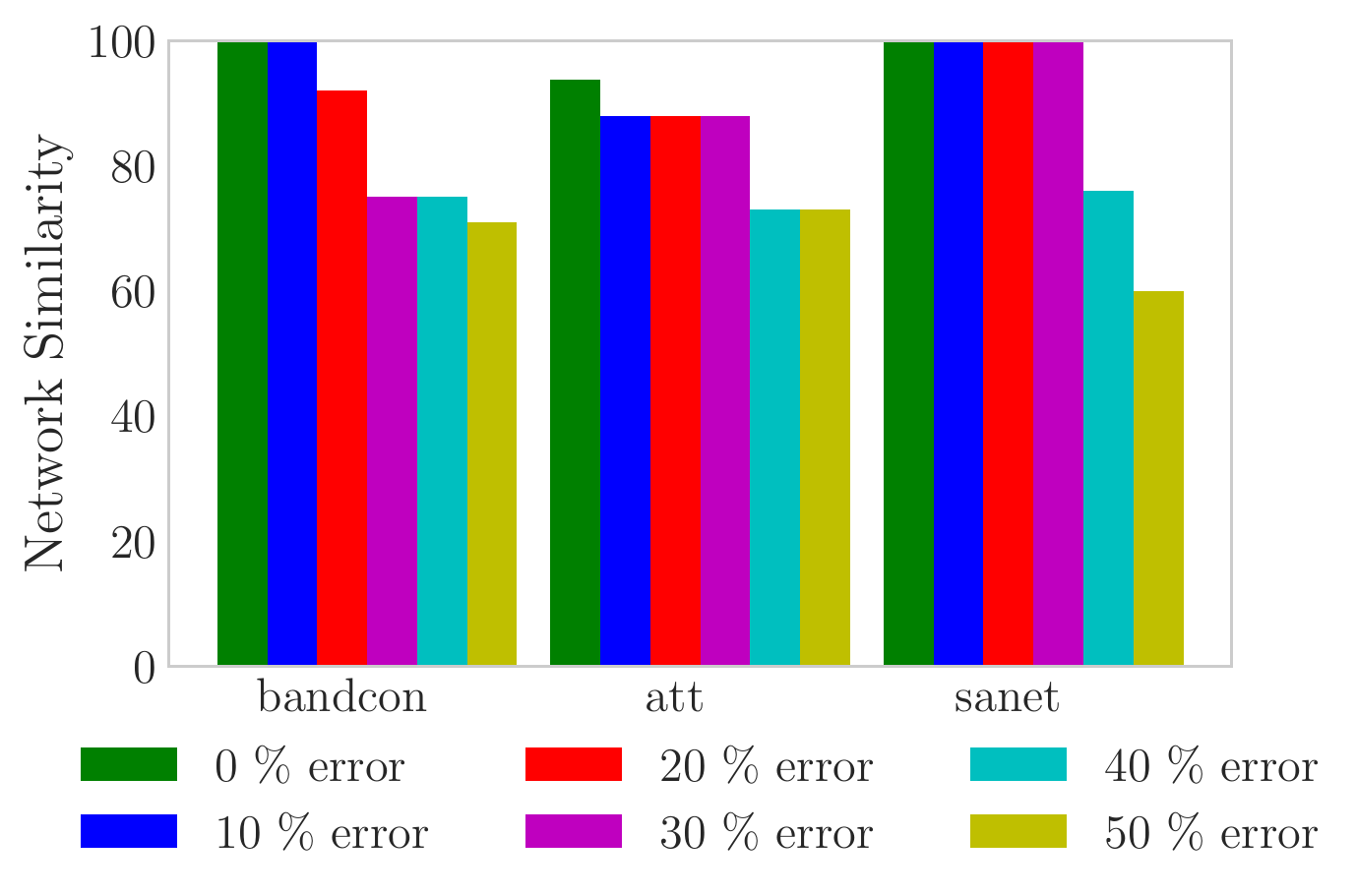}
\caption{NS scores for varying error probabilities}
\label{fig:noisy_data}
\vspace*{-0.15in}
\end{figure}

\begin{figure}[!htb]
\centering
\includegraphics[width=0.95\linewidth, height=41mm]{./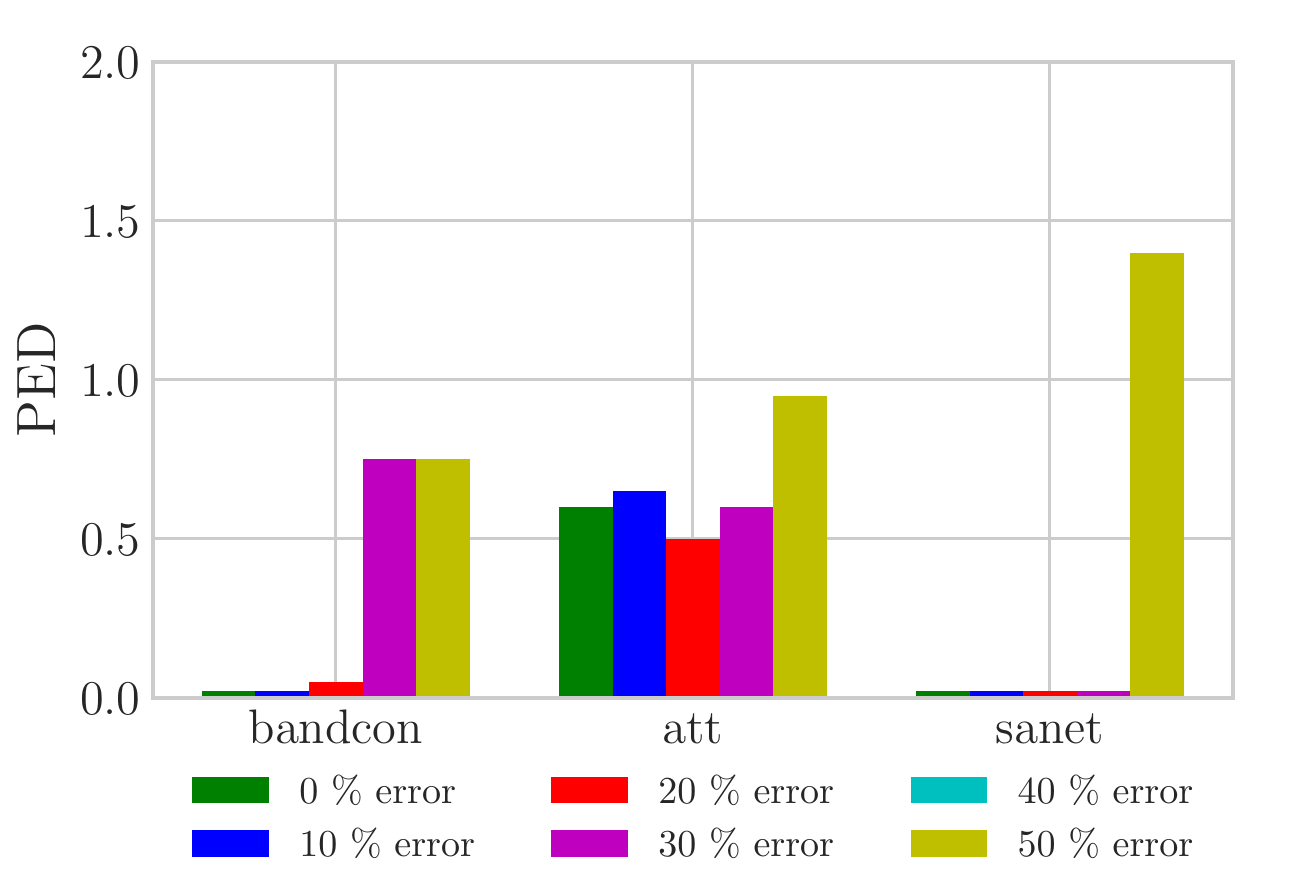}
\caption{PED for varying error probabilities}
\label{fig:noisy_data_ped}
\vspace*{-0.15in}
\end{figure}

\subsection{Measurements from a network emulator}
\label{sec:deter}
In this set of experiments, we evaluate OCCAM with its measurement inputs derived from a packet-level network emulator called DETER \cite{Mirkovic10the}. We use the TATA  and 6et1 topologies used in the previous sections and  configured them on DETER by specifying the nodes and links. DETER {\em automatically} sets up routing paths between the hosts that are specified, i.e., we do not specify the routes.  We obtain the PSMs and DMs from DETER  by sending packet probes. These experiments emulate the real-world situation where the network provider(s) set the routing paths and path metrics are derived from actual packet flows.

{\em PSM and DM metrics.} We use a standard delay covariance technique to measure PSMs.For every source $S$ and receivers $T1$, $T2$ and $T3$, we compare $PSM(S,T_1,T_2)$ and $PSM(S,T_2,T_3)$ by performing the following experiment. A train of three back-to-back packets, $p_{i1}$, $p_{i2}$ and $p_{i3}$ destined to receivers $T_1$, $T_2$ and $T_3$ respectively are sent from source $S$. At each receiver $T_j$, a record of the delays experienced by packets $p_{ij}$ is maintained. We calculate the delay covariance $C(S; T_i, T_j)$ between packets received at receivers $T_i$ and $T_j$. If it is observed that $C(S; T_1, T_2) > C(S; T_2, T_3)$, then a PSM constraint  $PSM(S,T_1,T_2) > PSM(S, T_2, T_3)$ is added to the optimization. The DMs are obtained from the $TTL$ field  \cite{Postel1981} of the IP Header in a standard way as described in Section~\ref{sec:pathmetrics}.

The results of the DETER experiments for TATA and 6et1 are shown in Figures~\ref{fig:tata_deter} and \ref{fig:6et1_deter} respectively. In the case of TATA, OCCAM inferred extra links (8,2) and (7,0), but rest of the inference was accurate, receiving an NS score of 89.4\% and a PED of 0.7. In the case of 6et1, nodes 8 an 6 in the ground truth were merged in the inferred network, but rest of the inference was accurate. The NS score was 88.8\% and PED was 0.66. Notably, for the 6et1  (resp., TATA) network, 18\% (resp., 26 \%) of the PSM constraints were incorrect due to measurement error. However, despite the errors in its input. OCCAM produced high quality inferences.
\begin{figure}[!htb]
\centering
\includegraphics[width=0.9\linewidth, height=46mm]{./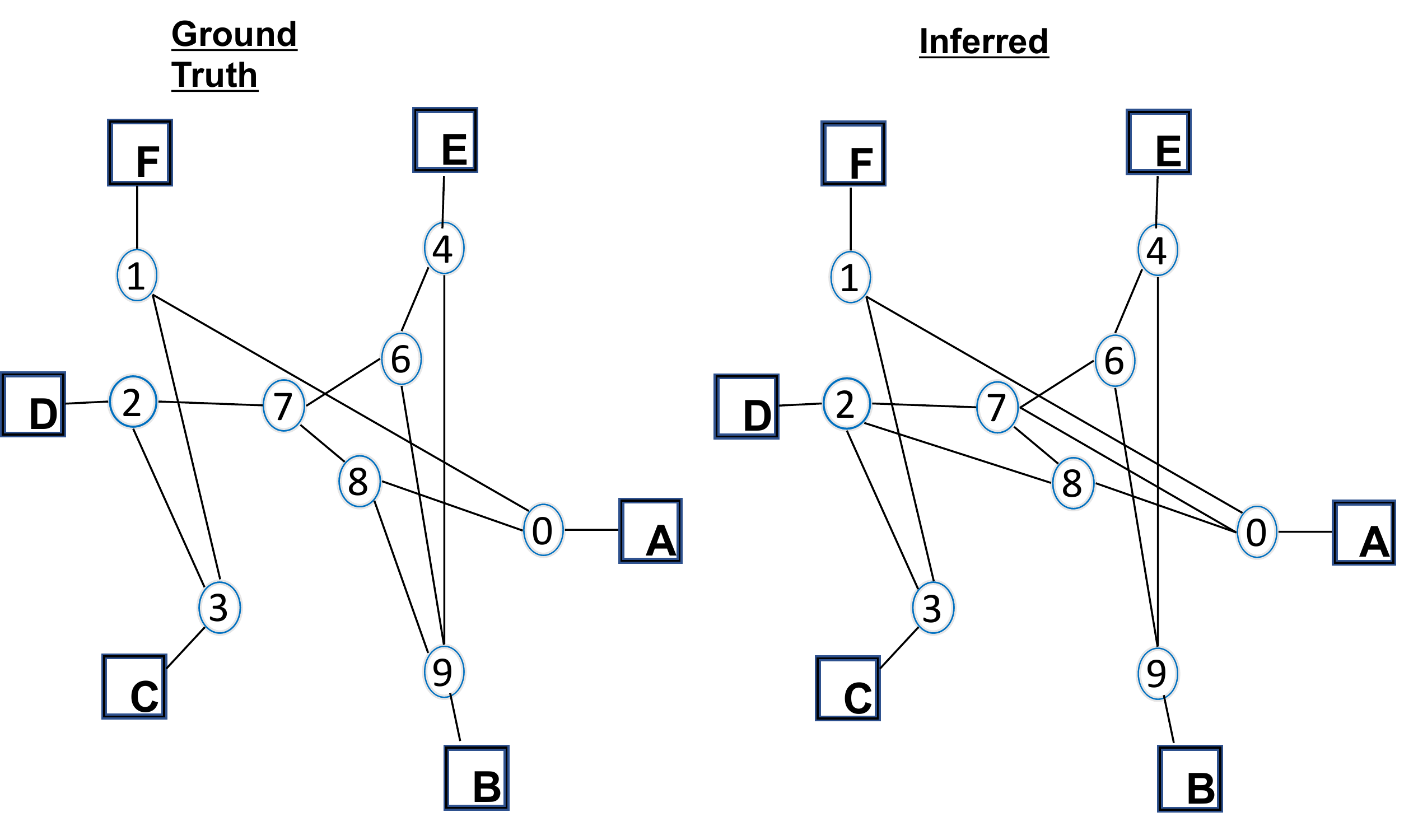}
\caption{TATA network on DETER achieved an NS score of 89.4\% and a PED of 0.7}
\label{fig:tata_deter}
\vspace*{-0.1in}
\end{figure}

\begin{figure}[!htb]
\centering
\includegraphics[width=0.95\linewidth, height=46mm]{./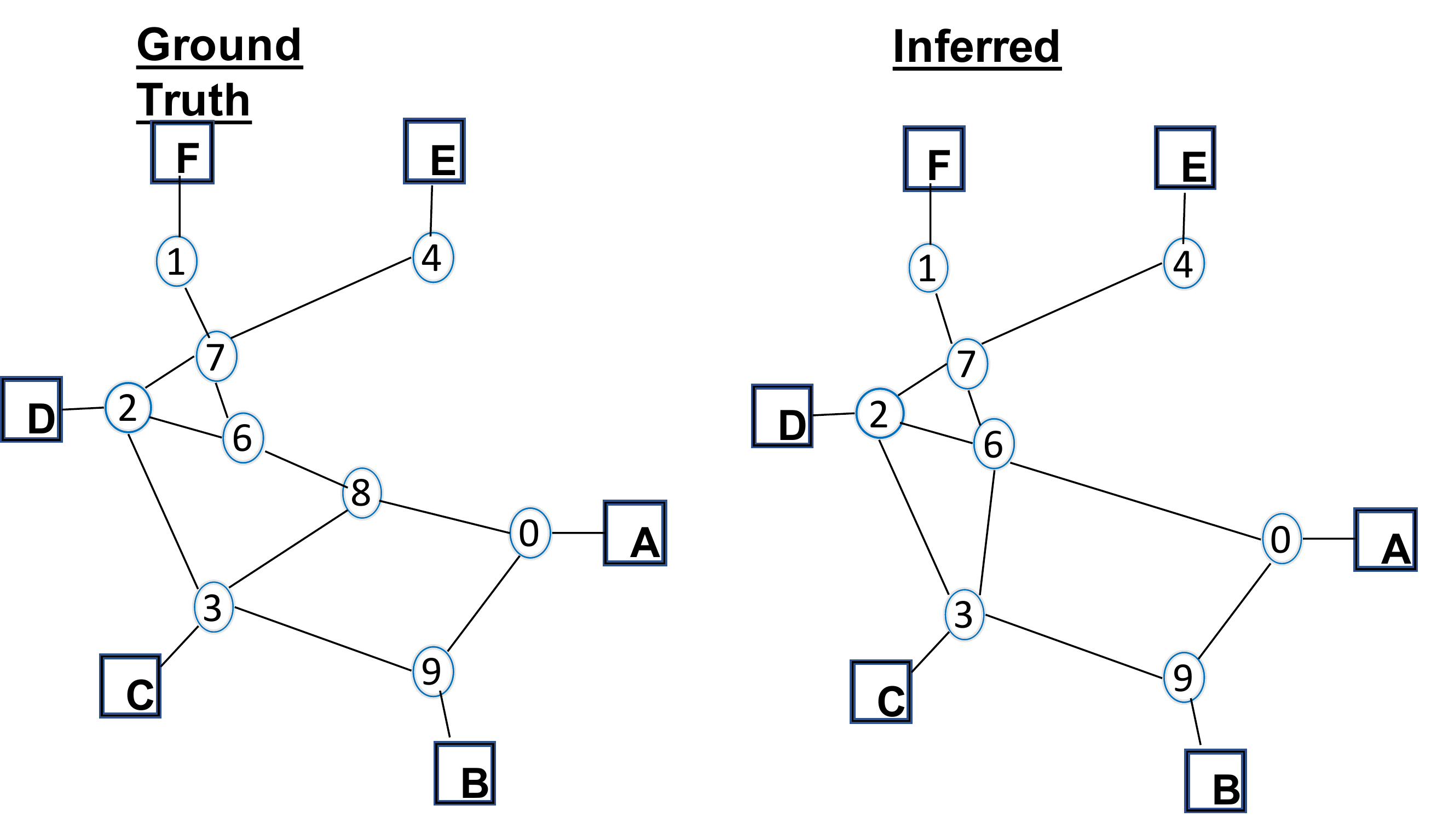}
\caption{6et1 network on DETER achieved an NS score of 88.8\% and PED of 0.66.}
\label{fig:6et1_deter}
\vspace*{-0.05in}
\end{figure}

\section{The Tree Stitching Problem}
A variant of the network inference problem that we call the ``tree-stitching problem'' builds directly on classic network tomography results that infer source trees from PSM metrics. In this variant,  we use well-known methods \cite{Ratnasamy1999, Duffield2004, Duffield2002} to create source trees rooted at each source host $S$ and provide these trees as measurement inputs to OCCAM, in lieu of the PSM metrics.  OCCAM ``stitches'' together these source trees to infer a network as described below.

%a network $N$ is said to be \textit{consistent} with the given source trees, if for each source $S \in H$, there exists a one-to-one mapping $\delta: {\mathcal B}^S \rightarrow V$; and %each segment $s \in {\mathcal T}^S$ is a path in $G$.

\begin{figure}[!htb]
\centering
\includegraphics[width=0.50\linewidth, height=46mm]{./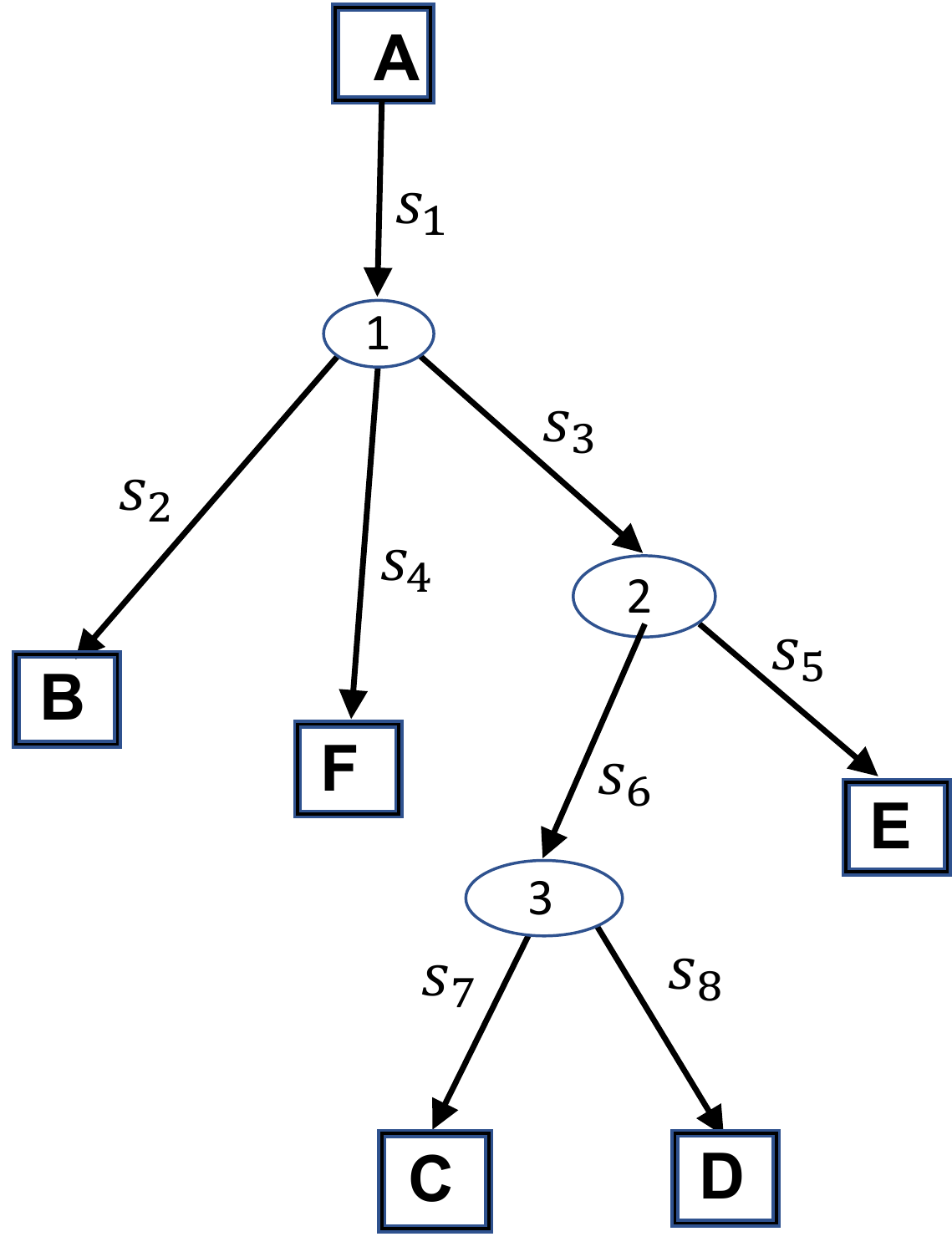}
\caption{A source tree can be viewed as a set of segments and branch points.}
\label{fig:srctree}
\end{figure}
\vspace*{-0.2in}

\subsection{Optimization Step}
The main challenge is to add constraints that ensure that the inferred network $N' = (G', H, P')$ is consistent with all the source trees provided as measurement inputs, i.e., the logical source tree formed by the paths in $P'$ from each source $S$ is isomorphic to the given source tree rooted at $S$.

A source tree can represented by a set of segments and branch points (see Figure \ref{fig:srctree} for an example).  Each segment $s \in {\mathcal T}^S$ represents one or more links in the underlying graph. Segment $s$ terminates at a unique branch point $b_s \in {\mathcal B}^S$ and branches into a set of outgoing segments ${\mathcal O}(s)$. For instance, in Figure \ref{fig:srctree}, $s_{1}$ is the incoming segment at branch point $1$, and ${\mathcal  O}(s_1)= s_{2}, s_{3}, s_{4}$ are the outgoing segments.   A source tree can be completely characterized by adding constraints to capture the segments ending at each branch points $b_s \in {\mathcal B}^S$.  For every segment $s \in {\mathcal T}^S$ terminating at branch point $b_s$, we add the following constraint:

\begin{equation}\label{eq:logical-trees}
\begin{split}
p_{i,j}^{s} \leq \sum_{k \in V} p_{j,k}^{s} +  \frac{1}{|{\mathcal O}(b_s)|} \sum_{s' \in {\mathcal O}(b_s)}\sum_{k \in V} p_{j,k}^{s'}
				 = b_{j}^{s} \\ \quad \forall i \in V, \quad \forall j \in  V \setminus H,
\end{split}
\end{equation}
where $p_{i,j}^{s}$ is an indicator variable indicating whether link $(i,j)$ is present in segment $s$. The above constraint (\ref{eq:logical-trees}) ensures, if there exists an incoming link at $j$ which is part of $s$, then there must exist an outgoing link $(j,k)$ which either belongs to same segment $s$ (first term above) or to one of the outgoing segments $s' \in O(b_s)$(second term above).
For every segment $s$ in ${\mathcal T}^S$ that ends at a host $T \in H$, we add the following constraint.
\begin{equation}\label{eq:logical-trees-end}
\begin{split}
p_{i,j}^{s} \leq \sum_{k \in V} p_{j,k}^{s} \quad \forall i \in V, \quad  \forall j \in  V \setminus T,
\end{split}
\end{equation}

If segment $s$ terminates at a host $T \in H$, the above constraint ensures there exists an outgoing link $(j,k)$ which belongs to same segment $s$,  if there exists an incoming link at $j$ which is part of $s$, unless $j = T$.
\begin{comment}
\begin{equation}\label{eq:branch-point}
\sum_{k \in V} p_{j,k}^{s} +  \frac{1}{|{\mathcal O}(b_s)|} \sum_{s' \in \textcolor{blue}{{\mathcal O}(b_s)}} \sum_{k \in V} p_{j,k}^{s'} \quad \forall j \in  V \setminus H
\end{equation}
\end{comment}
At each node $j \in V$, we ensure that there could be at most $1$ outgoing link that belongs to a segment $s$.
\begin{equation}\label{eq:one-out-link}
\sum_{k \in V} p_{j,k}^{s} \leq 1 \quad \forall j \in V
\end{equation}

\textbf{Boundary Conditions.}  For a segment $s \in {\mathcal T}^S$ originating at root $S$ we add the following constraint,
\begin{equation}\label{eq:src-boundary}
\begin{split}
\sum_k p_{S,k}^s = 1,
\end{split}
\end{equation}
which ensures that the segment will always contain an outgoing link $(S,k)$ for some $k \in V$.
For a segment $s \in {\mathcal T}^S$ terminating at host $T \in H$, we add the following constraints,
\begin{equation}\label{eq:dst-inc-boundary}
\begin{split}
\sum_j p_{j,T}^s = 1 \quad \forall T \in H \setminus S,
\end{split}
\end{equation}
\begin{equation}\label{eq:dst-boundary}
\begin{split}
\sum_i p_{T,i}^s = 0 \quad \forall T \in H \setminus S,
\end{split}
\end{equation}
where, constraints in (\ref{eq:dst-inc-boundary}) ensures there always exists an incoming link at $T$ that belongs to segment $s$ and constraints in (\ref{eq:dst-boundary}) ensures that segment terminating at the terminal $T \in H$ will not contain an outgoing link $(T,k)$ for any $k \in V$.

\begin{algorithm}\label{al:network-construct2}
\SetAlgoLined
\caption{\textit{GRAPH-CONSTRUCT-II}}
\textbf{Input :} Set of hosts H and solution variables \;
\textbf{Output :} Network $N'=(G',H,P')$. \;
\BlankLine
Initialize: $V' \leftarrow \phi$, $E' \leftarrow \phi$, $P' \leftarrow \phi$ \;
\BlankLine
\ForEach{$S\in H$} {%
	\BlankLine
	$P^S \leftarrow \phi$ \;
	\ForEach{s in ${\mathcal T}^S$}{
		$L = \{(i,j) : p_{i,j}^{s} = 1$ \} \;
		$P^S \leftarrow P^S \cup L$ \;
	}	
	\ForEach{$(i, j) \in P^S$}{ 
		$E' \leftarrow E' \cup (i,j)$ \;
		$V' \leftarrow V' \cup \{i\} \cup \{j\}$ \; 
	} 
	\ForEach{$T \in H$}{
		$\pi(S,T) \leftarrow$ shortest path from $S$ to $T$ in $P^S$ \;
		$P' \leftarrow P' \cup \pi(S,T)$ \;	
		\BlankLine
	}
}
$G' \leftarrow (V', E')$ \;
\Return $N'=(G',H,P')$ \;
\end{algorithm}

%\vspace*{-0.1in}

To maintain consistency, we need to make sure that if link $(i,j)$ belongs to any of the segments in ${\mathcal T}^S$, it should also belong to $P^S$. Further, we want to ensure that each link $(i,j)$ is part of at most one segment. To enforce these properties, we add the following constraint 
\begin{equation}\label{eq:src-consistency}
\begin{split}
\sum_{s \in {\mathcal T}^S} p_{i,j}^{s} = s_{i,j}^S \quad \forall (i,j) \in E,
\end{split}
\end{equation}
where $s_{i,j}^S$ is a binary variable indicating if link $(i,j)$ belongs to $P^S$.  The \textit{LHS} of the above equation counts the number of segments in ${\mathcal T}^S$ in which link $(i,j)$ is present. The above equation forces the number of such segments to be at most $1$ (as $RHS$ is a binary variable), and if such a segment exists, $s_{i,j}^S$ is set to $1$.

For the tree-stitching version of the problem, OCCAM's optimization step is modified to use (\ref{eq:objective}) as the objective function and (\ref{eq:relative}),(\ref{eq:src-tree})-(\ref{eq:boundary-absence-dst}), (\ref{eq:logical-trees})-(\ref{eq:src-consistency}) as the constraints.

\subsection{Inference step}

Algorithm GRAPH-CONSTRUCT-II infers a network $N' = (G', H, P')$ using the values of the $p_{i,j}^s$ variables in the solution of the optimization step. For every host $S \in H$, lines 6-9 in the algorithm finds the set of links $P^S$ such that each link $(i,j) \in P^S$ belongs to some segment $s \in {\mathcal T}^S$ (i.e., $p_{i,j}^s = 1$ ). Now, for every host $T \in H$, $\pi(S,T)$ is computed as the shortest path from $S$ to $T$ in $P^S$.  Each link $(i,j) \in P^S$ and the corresponding nodes $i$ and $j$ are added to the set of links $E'$ and the set of nodes $V'$ respectively. Thus at the end of the for loop in line 19, the algorithm infers routing paths $\pi(S,T) \in P'$ between every pair of hosts $(S,T) \in H \times H$, the set of links $E'$ and the set of nodes $V'$ in $G'$.

\subsection{Correctness of Tree Stitching}
We prove that the modified version of OCCAM presented in this section performs tree stitching correctly.
\begin{theorem} \label{th:tree} Given source trees and DMs as inputs, OCCAM infers a network $N'=(G',H,P')$; such that the routing paths P' satisfy the following properties
\begin{enumerate}
\item Each routing path $\pi(S,T) \in P'$ is an acyclic path from host $S$ to $T$; and
\item G' and P' is \textit{consistent} with the given source trees and DM measurement inputs.
\end{enumerate} 
\end{theorem}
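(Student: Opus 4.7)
The plan is to mirror the structure of the proof of Theorem~\ref{th:psm}, but applied to the per-segment variables $p_{i,j}^s$ rather than to the source-tree variables $s_{i,j}^S$ directly. First I would argue that for each source $S$ and each segment $s \in \mathcal{T}^S$, the optimization's boundary and propagation constraints force $\{p_{i,j}^s\}$ to encode exactly one non-self-intersecting chain of links from the start of the segment (either $S$, if $s$ is the root segment, or a branch point of an upstream segment) to its end (either the branch point $b_s$ or a host $T$). Concretely, (\ref{eq:src-boundary}) launches the chain at the root of the tree, (\ref{eq:dst-inc-boundary}) and (\ref{eq:dst-boundary}) pin the chain's endpoint at a host when $s$ terminates in one, (\ref{eq:one-out-link}) forbids any node from having more than one outgoing link inside a single segment, and (\ref{eq:logical-trees}) is the propagation rule: whenever a link enters $j$ in segment $s$, either $j$ has an outgoing link in $s$ or $j$ coincides with the branch point $b_s$ and every successor segment $s' \in \mathcal{O}(b_s)$ starts with an outgoing link at $j$. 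This yields single-chain segments.

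Second, I would glue these chains together at branch points to reconstruct an isomorphic copy of $\mathcal{T}^S$ inside $G'$. The key observation is that the second term of (\ref{eq:logical-trees}), with its $\tfrac{1}{|\mathcal{O}(b_s)|}$ normalization, equals $1$ exactly when every outgoing segment at $b_s$ has an outgoing link from the node in question; combined with the first term being zero (since the current segment ends at the branch point), this forces the branching to occur at a single physical node. By (\ref{eq:src-consistency}), the variable $s_{i,j}^S$ aggregates these per-segment links into $P^S$ while additionally forcing each link to lie in at most one segment; constraint (\ref{eq:src-tree}) then guarantees that $P^S$ has at most one incoming link per node. Together these imply that $P^S$ is a rooted tree whose leaves include every host, and whose logical tree at $S$ is isomorphic to $\mathcal{T}^S$.

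Third, I would discharge both parts of the theorem. For part~(1), because $P^S$ is a rooted tree containing $T$, the shortest path from $S$ to $T$ returned by \textsc{graph-construct-II} is simply the unique path in that tree, hence acyclic, and runs from $S$ to $T$ as required. For part~(2), consistency with the source trees is immediate from the isomorphism in the previous paragraph, and consistency with the DM inputs follows directly from constraint~(\ref{eq:relative}) on the length variables $m_T^S$, which equal the lengths of the inferred paths through (\ref{eq:distance}) applied to the tree $P^S$.

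The main obstacle I expect is the careful verification of the branch-point constraint~(\ref{eq:logical-trees}): one must rule out degenerate assignments in which a single physical node is ``shared'' across multiple chains, or in which branching is smeared across a node other than $b_s$. Dovetailing this with the terminal case governed by (\ref{eq:logical-trees-end}), and checking that (\ref{eq:src-consistency}) really forbids a link from being double-counted across segments, will require some bookkeeping; but the overall argument is structurally identical to that of Theorem~\ref{th:psm} and I would defer the full case analysis to an appendix analogous to Appendix~\ref{app:correctness}.
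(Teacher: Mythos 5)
Your proposal follows essentially the same route as the paper's Appendix~\ref{app:correctness2}: its Lemma~\ref{l:segment-path} is your per-segment single-chain claim, its Lemma~\ref{l:segment-branchpoint} is your gluing argument based on the $\tfrac{1}{|\mathcal{O}(b_s)|}$-normalized second term of (\ref{eq:logical-trees}) together with (\ref{eq:one-out-link}) and (\ref{eq:src-consistency}), and the final assembly (root segment launched by (\ref{eq:src-boundary}), recursion down the tree, DM consistency via Lemma~\ref{l:v-assign-mst} and (\ref{eq:relative})) is identical. The one point to make explicit when you fill in the deferred bookkeeping is that non-self-intersection of a segment's chain does not follow from the segment-local constraints you list; as in the paper, it requires pushing each $p_{i,j}^{s}=1$ through (\ref{eq:src-consistency}) to $s_{i,j}^{S}=1$ and then invoking Lemma~\ref{l:cycle-free}, which rests on the distance constraints (\ref{eq:distance}).
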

\begin{pfsketch}
 Consider a host $S \in H$. Let $s_1 \in {\mathcal T}^S$ be a segment beginning at host $S$. Constraints in (\ref{eq:src-boundary}) ensure there exists a link $(S, i)$ that belongs to segment $s_1$, i.e., $p_{S,i}^{s_1}$ equals 1. Now constraints in (\ref{eq:logical-trees}) ensures that there exists a path $\pi(S, i_n)$ such that each link in $\pi(S, i_n)$ belongs to segment $s_1$. Thus, path $\pi(S, i_n)$ can be mapped to segment $s_1$ and node $i_n$ can be mapped to the branch-point $b_{s_1}$. The constraint (\ref{eq:logical-trees}) also ensures that there exist outgoing links at node $i_n$, such that each such outgoing link belongs to a segment $s' \in {\mathcal O}(s)$. Now for each segment $s' \in {\mathcal O}(s)$, the constraint ensures that there exists a path $\pi(i_n, i_{n_{s'}})$ such that each link in $\pi(i_n, i_{n_{s'}})$ belongs to segment $s'$. Thus, path $\pi(i_n, i_{n_{s'}})$ can be mapped to segment $s'$ and branch-point $b_{s'}$ can be mapped to node $i_{n_{s'}}$. Thus, each segment $s \in {\mathcal T}^S$ can be mapped to a path in $G'$ and each branch-point $b_s$ can be mapped to a node $v \in V'$. This shows that the source trees derived from $P'$ is isomorphic to the source trees provided as inputs. Constraints in (\ref{eq:relative}) ensure that DMs are satisfied. We provide the complete proof in Appendix \ref{app:correctness2}.
\end{pfsketch}

\begin{comment}
\textcolor{red}{
\subsection{Reconstruct-ability}
\begin{theorem} If the ground truth network $N=(G,H,P)$ is a tree, then given source trees and DMs as inputs, OCCAM uniquely reconstructs the network $N=(G,H,P)$ using just the DMs.
\end{theorem}
\begin{pfsketch}
\end{pfsketch}
}
\end{comment}

\subsection{Empirical results}
We ran OCCAM with the source trees and DM inputs.  Figure~\ref{fig:tree1} shows the results where OCCAM achieves an average NS score of 92.9\% and PED of 0.22, across the 12 tested networks. Thus, comparing Figures~\ref{fig:dm_psm} and \ref{fig:tree1}, using source trees and DMs as inputs yields similar results to using PSM metrics and DMs, i.e., source trees can take the place of PSM measurements and vice-versa. 

We also ran OCCAM with the source trees only as input, without DM inputs. Our results were similar to the case described in Section~\ref{sec:psmonly} where only PSM inputs were used.  As in Section~\ref{sec:psmonly}, the NS score and PED score was computed after allowing at most two node pairs to be contracted in the ground truth network. The results shown in Figure~\ref{fig:tree2} shows the results where OCCAM achieves an average NS score of 81.5\% and PED of 0.48 across the 12 tested networks. These results indicate that just source trees as inputs yields similar results as using just the PSM inputs. 

\begin{figure}[!htb]
%\vspace*{-0.2in}
\centering
\captionsetup{justification=centering}
\includegraphics[width=0.95\linewidth, height=50mm]{./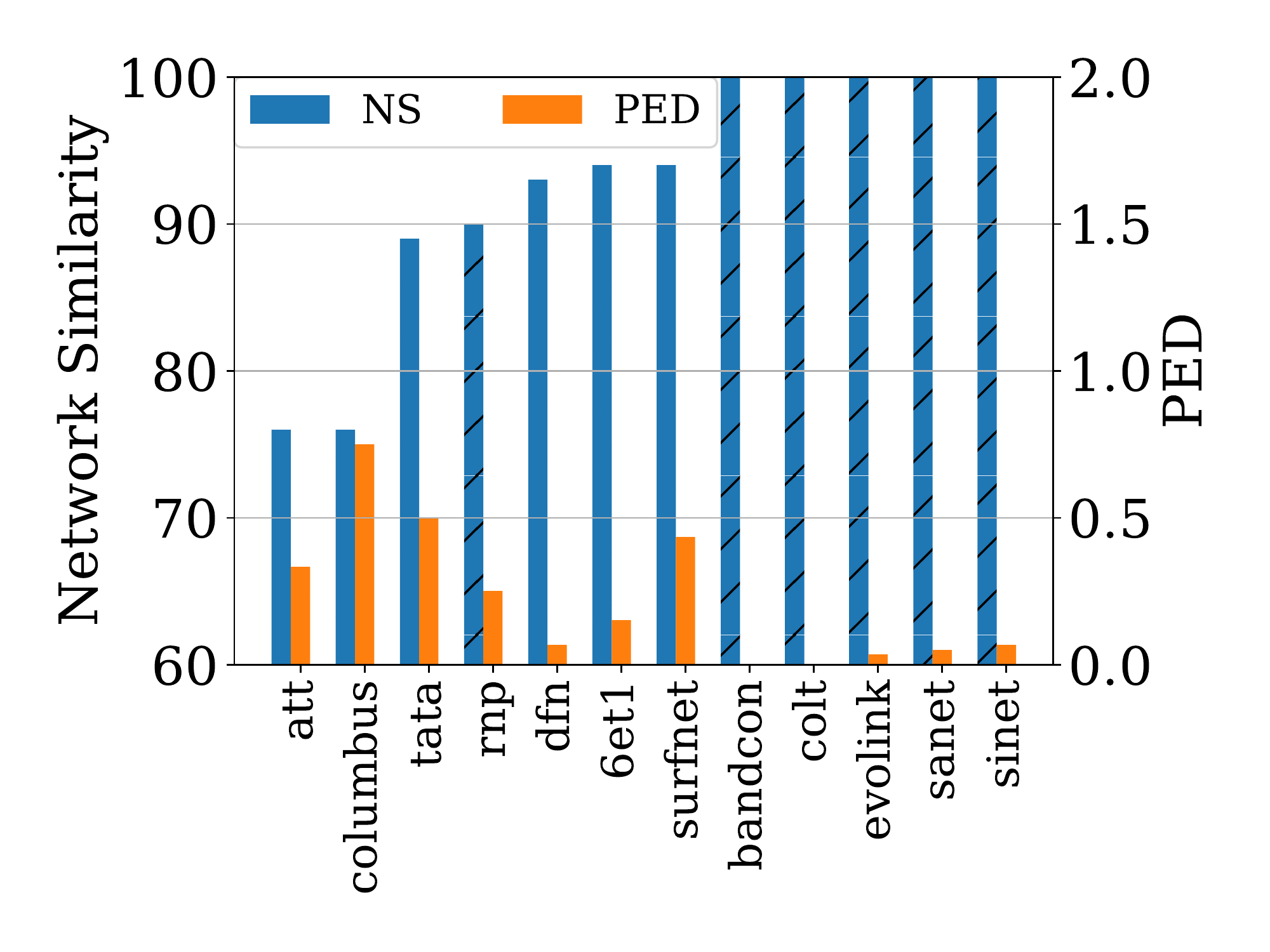}
\caption{Tree stitching with DM}
\vspace*{-0.1in}
\label{fig:tree1}
\end{figure}

\begin{figure}[!htb]
%\vspace*{-0.2in}
\centering
\includegraphics[width=0.95\linewidth, height=50mm]{./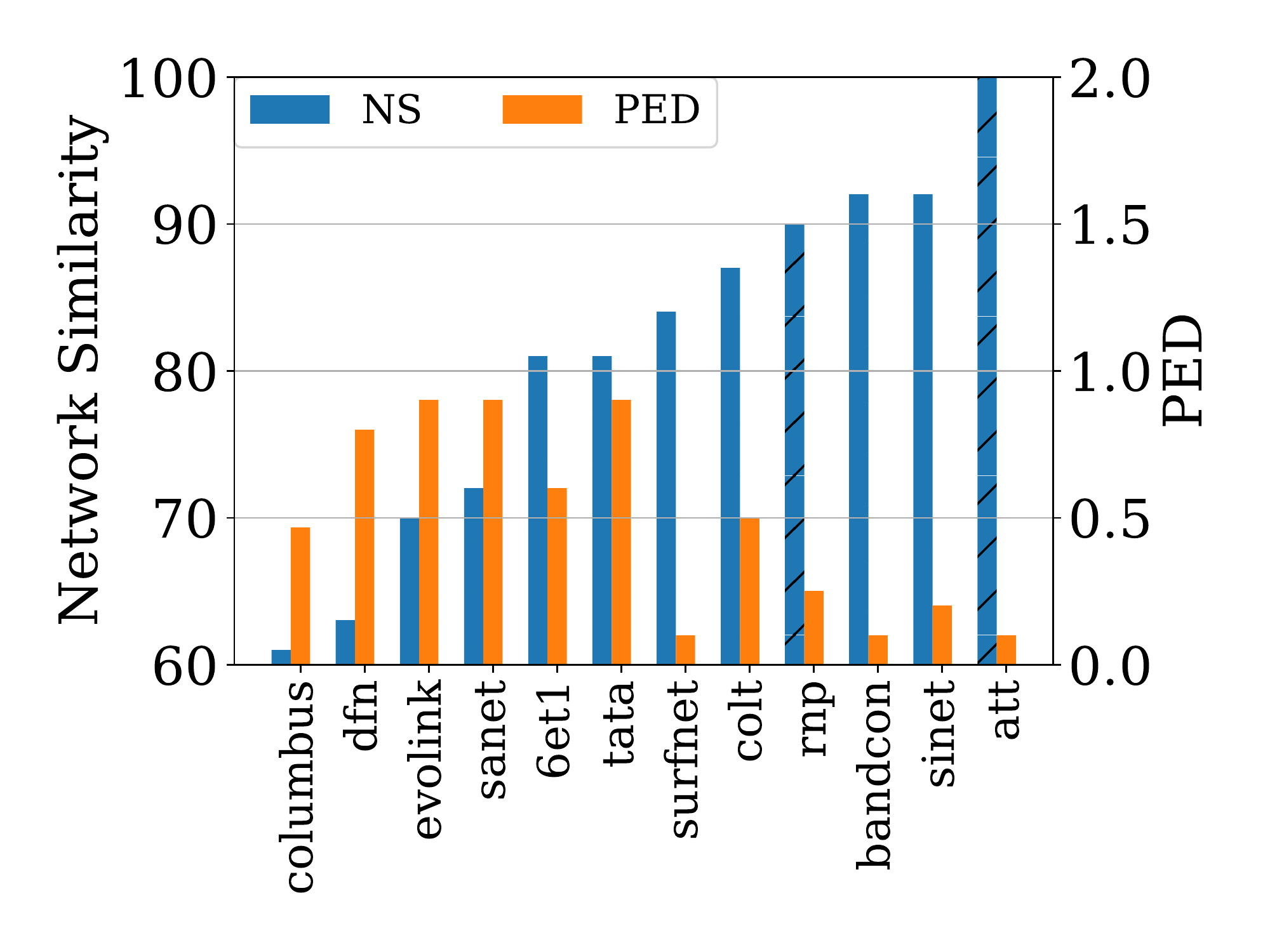}
\caption{Tree stitching without DM}
\vspace*{-0.1in}
\label{fig:tree2}
\end{figure}

\section{Conclusion}
Our work is the first to demonstrate the feasibility of inferring the complete network topology and routing paths using path sharing  and path distance information measured at the hosts.  However, many questions remain open. An interesting question is how much path sharing information is needed to infer the complete network. Our preliminary work suggests that highly accurate inference is possible even with partial and/or incorrect path sharing information. Another natural extension of our work is whether the topology and path inference can be extended to infer the link capacities in the network.

\begin{comment}
\section{Acknowledgements}
This research was supported in part by  DARPA under Contract No. N66001-15-C-4045 and NSF grant CNS-1413998.
\end{comment}

\bibliographystyle{abbrv}
\bibliography{ms}
\appendix

\section{Correctness of OCCAM}
\label{app:correctness}
We prove the correctness of OCCAM by formally showing that it provides a solution that satisfies all PSM and DM observations.

\begin{lemma} \label{l:cycle-free} For a host $S \in H$, if ${\mathcal L}^S = \{(i,j):s_{i,j}^{S}=1\}$, then ${\mathcal L}^S$ does not contain a cycle. 
\end{lemma}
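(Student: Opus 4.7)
The plan is to derive a contradiction from the distance-labeling constraint (\ref{eq:distance}) together with the unique-incoming-link constraint (\ref{eq:src-tree}). Concretely, suppose for contradiction that ${\mathcal L}^S$ contains a cycle $i_1 \to i_2 \to \cdots \to i_k \to i_1$ with $k \geq 1$, that is, $s_{i_\ell, i_{\ell+1}}^{S} = 1$ for each $\ell$ (indices mod $k$). I will show that the variables $m_{i_\ell}^{S}$ cannot be consistently assigned.

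First I would rule out the possibility that the source host $S$ itself lies on the cycle. If $S = i_\ell$ for some $\ell$, then the cycle provides an incoming link $(i_{\ell-1}, S)$ with $s_{i_{\ell-1}, S}^{S} = 1$, contradicting the boundary constraint (\ref{eq:boundary-absence-src}) which forces $\sum_{j} s_{j,S}^{S} = 0$. Hence every node on the cycle is distinct from $S$, and the initialization $m_{S}^{S} = 0$ plays no special role in the argument below.

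Next I would exploit (\ref{eq:src-tree}): for each $j \in V$, at most one $i$ satisfies $s_{i,j}^{S} = 1$. On the cycle, the link $(i_{\ell-1}, i_\ell)$ is already such an incoming link at $i_\ell$, so it is the \emph{only} one with $s_{\cdot,i_\ell}^{S} = 1$. Substituting into the distance constraint (\ref{eq:distance}) then collapses the sum to a single term, giving
\begin{equation*}
m_{i_\ell}^{S} \;=\; m_{i_{\ell-1}}^{S} + 1 \quad \text{for every } \ell = 1,\dots,k.
\end{equation*}
Chaining these $k$ equalities around the cycle yields $m_{i_1}^{S} = m_{i_1}^{S} + k$, which forces $k = 0$ and contradicts $k \geq 1$. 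Therefore no cycle can exist in ${\mathcal L}^{S}$.

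I do not expect any serious obstacle here: the argument is essentially the standard observation that a consistent integer potential (distance-from-source) on the arcs of a directed graph precludes directed cycles. The only point that requires a bit of care is verifying that the sum in (\ref{eq:distance}) truly reduces to a single term at each cycle node, which is exactly where (\ref{eq:src-tree}) is needed; after that, the telescoping argument is immediate.
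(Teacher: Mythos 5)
Your proof is correct and follows essentially the same route as the paper's: use the unique-incoming-link constraint (\ref{eq:src-tree}) to collapse the distance constraint (\ref{eq:distance}) to $m_{i_\ell}^{S} = m_{i_{\ell-1}}^{S} + 1$ at each cycle node, then telescope around the cycle to force its length to zero. Your extra step ruling out $S$ lying on the cycle via (\ref{eq:boundary-absence-src}) is a small refinement the paper omits, but it does not change the argument.
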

\begin{proof} (Proof by Contradiction) Let ${\mathcal L} \subset {\mathcal L}^S$ consist of $n$ links $(i_1, i_2)$,$(i_2, i_3)$,$....,(i_{n-1}, i_n),  (i_n, i_1)$ that form a cycle.  Constraint in (\ref{eq:distance}) ensures,
\begin{align*}
m_{i_2}^{S} = \sum_{j \in V} s_{j, i_2}^S (m_{j}^{S} + 1)
\end{align*}
Now constraints in (\ref{eq:src-tree}) ensure there exists at most 1 incoming link at node $i_2$ (i.e, $\sum_{j \in V} s_{j, i_2}^S$ equals 1) ; and as link $(i_1, i_2) \in {\mathcal L}$,  $s_{{i_1},{i_2}}^{S} = 1$. The above equation thus reduces to, 
\begin{align*}
m_{i_2}^{S} = m_{i_1}^{S} + 1,
\end{align*}
Similarly, 
\begin{align*}
m_{i_3}^S = m_{i_2}^S + 1 = m_{i_1}^S + 2, 
\end{align*}  
\begin{align*}
m_{i_n}^S = m_{i_{n-1}} + 1 = m_{i_1}^{S} + n - 1,
\end{align*}
As link $(i_n, i_1) \in {\mathcal L}^S$,  
\begin{align*}
m_{i_1}^S = m_{i_n}^S + 1 = m_{i_1}^S + n, 
\end{align*}
\begin{align*}
\implies n = 0 
\end{align*}
This contradicts our assumption of existence of a cycle in ${\mathcal L}^{S}$ and thus completes the proof.
\end{proof}

\begin{lemma}\label{l:v-assign-mst} For a pair of hosts $(S,T) \in H \times H$,  $m_{T}^{S}$ is assigned a value equal to the length of the routing path $\pi(S,T) \in P'$.
\end{lemma}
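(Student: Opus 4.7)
\textbf{Proof plan for Lemma \ref{l:v-assign-mst}.}

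The plan is to trace the routing path $\pi(S,T)$ constructed by Algorithm GRAPH-CONSTRUCT-I backwards from $T$ to $S$, and show inductively that constraint (\ref{eq:distance}) forces $m^S_{\cdot}$ to increment by exactly one at each step, so that after $k$ steps the value $m^S_T$ equals $k$, the number of links in $\pi(S,T)$.

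First I would set up the backward walk. By the boundary condition (\ref{eq:boundary-presence-dst}), there exists exactly one node $i_1 \in V$ with $s^S_{i_1,T}=1$; thus link $(i_1,T)$ lies on $\pi(S,T)$. By (\ref{eq:src-tree}) this incoming link at $T$ is unique. Continuing, constraint (\ref{eq:boundary-path-constraint}) ensures that for any intermediate node $j \neq S$ that has an outgoing link in ${\mathcal L}^S$, there is also an incoming link in ${\mathcal L}^S$; together with the uniqueness from (\ref{eq:src-tree}) and the acyclicity guaranteed by Lemma \ref{l:cycle-free}, this shows that the backward walk $T \to i_1 \to i_2 \to \cdots$ must terminate after finitely many steps, and termination can only occur at a node with no incoming link in ${\mathcal L}^S$. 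The boundary conditions (\ref{eq:boundary-absence-src}) and (\ref{eq:boundary-absence-dst}) restrict this terminal node to be the source $S$ (any destination host has an incoming link, and the backward walk never revisits a vertex by acyclicity). This identifies the backward walk with exactly the sequence produced by GRAPH-CONSTRUCT-I, and so its length is the length of $\pi(S,T)$.

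Next, I would apply constraint (\ref{eq:distance}) along this sequence. Because (\ref{eq:src-tree}) makes the incoming link at each $i_\ell$ unique, the sum in (\ref{eq:distance}) collapses to a single term, giving $m^S_{i_{\ell-1}} = m^S_{i_\ell} + 1$ for every step of the walk (using the convention $i_0 = T$). Iterating and using the initialization $m^S_S = 0$, we obtain $m^S_T = k$, where $k$ is the number of links traversed to reach $S$, i.e., the length of $\pi(S,T)$.

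The main obstacle I anticipate is rigorously justifying that the backward walk terminates at $S$ rather than dead-ending at some intermediate node with no incoming link. This requires a careful combination of the boundary conditions (to rule out termination at any host other than $S$) with Lemma \ref{l:cycle-free} (to rule out infinite loops), and the observation that an intermediate node with an outgoing link in ${\mathcal L}^S$ but no incoming link would immediately violate (\ref{eq:boundary-path-constraint}). Once this termination argument is in place, the distance recurrence is essentially mechanical.
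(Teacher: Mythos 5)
Your proposal is correct and follows essentially the same route as the paper's proof: both use constraint (\ref{eq:distance}) together with the uniqueness of the incoming link guaranteed by (\ref{eq:src-tree}) to collapse the sum to the recurrence $m^S_{i_{k+1}} = m^S_{i_k}+1$, anchor it at $m^S_S=0$, and iterate along $\pi(S,T)$. The only difference is that you traverse the path backward from $T$ and also re-establish that this backward walk terminates at $S$ (via (\ref{eq:boundary-path-constraint}), the boundary conditions, and Lemma \ref{l:cycle-free}) — a fact the paper proves separately in its main correctness theorem and simply presupposes when stating this lemma.
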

\begin{proof} Let $\pi(S,T) = \{ i_0, i_1, i_2, ..... , i_{n-1}, i_n \}$ be a path of length $n$. We will first prove that for any node $i_k $ on path $\pi(S,T)$, $m_{i_k}^{S} = k$. We prove by induction. 

By definition $m_{S}^S = m_{i_0}^S = 0$. Constraints in (\ref{eq:distance}) ensures,
\begin{align*}
m_{i_1}^{S} = \sum_{j \in V} s_{j, i_i}^S (m_{j}^{S} + 1)
\end{align*}
Now constraints in (\ref{eq:src-tree}) ensure there exists at most 1 incoming link at node $i_1$ (i.e., $\sum_{j \in V} s_{j, i_i}^S$ equals 1); and as link $(i_0, i_1) \in \pi(S,T)$, $s_{{i_0},{i_1}}^{S} = 1$. The above equation reduces to,
\begin{align*}
m_{i_1}^{S} = m_{i_0}^{S} + 1 = 1
\end{align*}
Thus $m_{i_1}^{S} = 1$. Now lets assume $m_{i_k}^{S} = k$ for some $k \in [1, n)$.  With a similar argument as above,
\begin{align*}
m_{i_{k+1}}^{S} = m_{i_{k}}^{S} + 1 = k + 1
\end{align*}
By induction, the statement $m_{i_k}^{S} = k$ is true. Thus,
\begin{align*}
m_{i_{n}}^{S} = m_{T}^{S} =  n
\end{align*}
This concludes the proof.
\end{proof}

\begin{lemma}\label{l:v-assign-v-st}  For a pair of hosts $(S,T) \in H \times H$, $v_{i}^{S,T} =1$, if and only if, node $i$ is on the routing path $\pi(S,T) \in P'$.
\end{lemma}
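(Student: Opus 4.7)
The plan is to prove both directions of the biconditional by induction using constraint~(\ref{eq:node-on-path}) together with the acyclicity of $P^S$ established in Lemma~\ref{l:cycle-free}. Let $\pi(S,T) = (i_0=S, i_1, \ldots, i_n=T)$ denote the inferred routing path produced by Algorithm~1, so each link $(i_k, i_{k+1})$ satisfies $s_{i_k, i_{k+1}}^S = 1$.

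First I would prove the ``if'' direction (if $i$ is on $\pi(S,T)$ then $v_i^{S,T}=1$) by reverse induction from $T$. The base case $v_T^{S,T}=1$ is supplied by the boundary assignment for destination hosts. For the inductive step, assume $v_{i_{k+1}}^{S,T}=1$ and apply constraint~(\ref{eq:node-on-path}) to the intermediate non-host node $i_k \in V \setminus H$: since $s_{i_k, i_{k+1}}^S = 1$, the term $v_{i_{k+1}}^{S,T}\, s_{i_k, i_{k+1}}^S$ contributes $1$ to the right-hand side, forcing $v_{i_k}^{S,T} \geq 1$; binarity of $v$ then pins it to $1$.

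Next I would prove the ``only if'' direction by forward tracing. Suppose $v_i^{S,T}=1$ with $i \in V \setminus H$. Constraint~(\ref{eq:node-on-path}) forces the existence of a successor $j$ with $s_{i,j}^S = 1$ and $v_j^{S,T}=1$. Iterating yields a chain $i = j_0, j_1, j_2, \ldots$ in which each consecutive pair corresponds to a link in $P^S$. By Lemma~\ref{l:cycle-free}, $\{(i,j) : s_{i,j}^S=1\}$ contains no cycle, so the chain cannot revisit a node and must terminate in finitely many steps. Termination can only occur at a host, because at every non-host with $v=1$ constraint~(\ref{eq:node-on-path}) supplies another successor. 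Using constraint~(\ref{eq:boundary-absence-dst}), which forbids outgoing $s^S$-links at any host other than $S$, together with the boundary setting that $v_{T'}^{S,T}=0$ for $T' \in H \setminus \{S,T\}$, the only host that can serve as the chain's terminus is $T$. Hence $i$ is connected to $T$ in $P^S$ by a sequence of $s^S$-links. By the source-tree property~(\ref{eq:src-tree}) this sequence is unique and coincides with the suffix of $\pi(S,T)$ reconstructed by Algorithm~1, placing $i$ on $\pi(S,T)$.

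The main obstacle I anticipate is ensuring that the forward chain in the ``only if'' direction truly ends only at $T$. This requires appealing both to constraint~(\ref{eq:boundary-absence-dst}) (so the chain cannot pass through or stop at any host $T' \neq S$ that is not $T$, since such a $T'$ has no outgoing $s^S$-link to chain from and its $v$-value is fixed at $0$ for the pair $(S,T)$) and to Lemma~\ref{l:cycle-free} for finiteness. Formalizing these boundary assignments carefully, and then invoking the uniqueness of incoming $s^S$-links guaranteed by constraint~(\ref{eq:src-tree}) to identify the reconstructed chain with $\pi(S,T)$, is the most delicate bookkeeping step; the rest is a straightforward induction driven by constraint~(\ref{eq:node-on-path}).
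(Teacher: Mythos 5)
Your ``if'' direction is the same argument the paper gives: reverse induction from $T$ along $\pi(S,T)$, with $v_T^{S,T}=1$ as the base case and constraint~(\ref{eq:node-on-path}) driving the inductive step, so nothing to add there. Your ``only if'' direction, however, is structured differently from the paper's. The paper argues by a single local contradiction: if $v_k^{S,T}=1$ for some $k\notin\pi(S,T)$, constraint~(\ref{eq:node-on-path}) produces a successor $j$ with $v_j^{S,T}=1$ and $s_{k,j}^S=1$, and since $j$ (being ``on the path'') already has its unique incoming path link, $(k,j)$ would be a second incoming link at $j$, violating~(\ref{eq:src-tree}). That version is shorter but quietly uses the very equivalence being proved at the neighbor $j$ (it reads $v_j^{S,T}=1$ as ``$j$ is on the path''), so it is really an induction left implicit. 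Your forward-chasing argument --- follow successors supplied by~(\ref{eq:node-on-path}), invoke Lemma~\ref{l:cycle-free} and finiteness of $V$ for termination, rule out termination at hosts other than $T$, then use~(\ref{eq:src-tree}) to identify the chain with a suffix of $\pi(S,T)$ --- avoids that circularity and is self-contained, at the cost of being longer. One caveat: your exclusion of a terminus $T'\in H\setminus\{S,T\}$ leans on a boundary assignment $v_{T'}^{S,T}=0$ that the paper never states explicitly (it only fixes $v_T^{S,T}=1$ ``by definition,'' and constraint~(\ref{eq:node-on-path}) is imposed only for $i\in V\setminus H$, so host values of $v$ are otherwise unconstrained). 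You should either state that initialization as part of the formulation or replace it with an argument from the $s$-variables alone (e.g., that an incoming chain link at $T'$ would have to coincide, by~(\ref{eq:src-tree}) and~(\ref{eq:boundary-presence-dst}), with the terminal link of $\pi(S,T')$, and then derive the contradiction from there); as written that step is the one genuinely unsupported assertion, though it is a reasonable reading of the intended model rather than a flaw in the overall strategy.
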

\begin{proof}
Let $\pi(S,T) = \{S, i_{n-1}, i_{n-2}, ..... , i_{1}, T \}$. We first prove that for each node $i \in \pi(S,T)$, $v_{i}^{S,T}$ equals 1. We prove by induction. By definition $v_{T}^{S,T}$ equals 1. Constraints in Equation \ref{eq:node-on-path} ensure, 
\begin{align*}
v_{i_1}^{S,T} = \sum_{j \in V} v_j^{S,T} s_{i_1,j}^S,
\end{align*}
We know $v_T^{S,T}$ equals 1; and $s_{i_1,T}^S$ equals 1 as link $(i_1,T)$ exists on path $\pi(S,T)$.  Consequently, for $j = i_T$, the above equation ensures $v_{i_1}^{S,T}$ equals 1. Now for some $k \in [1, n-1)$ lets assume  $v_{i_k}^{S,T}$ equals 1. Constraints in (\ref{eq:node-on-path}) ensure,
\begin{align*}
v_{i_{k+1}}^{S,T} = \sum_{j \in V} v_{j}^{S,T} s_{i_{k+1},j}^S.
\end{align*}
Now we know $v_{i_{k}}^{S,T}$ equals 1; and $s_{i_{k+1},k}^S$ equals 1 as link $(i_{k+1},k)$ exists on path $\pi(S,T)$. Consequently, for $j = i_k$, the above equation ensures $v_{i_{k+1}}^{S,T}$ equals 1. Thus by induction $v_{k}^{S,T}$ equals 1 for all $k \in [1, n)$.

We now prove that for a node $k \notin \pi(S,T)$, $v_{k}^{S,T}$ equals $0$. We prove by contradiction. Constraints in (\ref{eq:node-on-path}) ensure,
 \begin{align*}
v_{k}^{S,T} = \sum_{j \in V} v_{j}^{S,T} s_{k,j}^S,
\end{align*}
 now if $v_{k}^{S,T}$ equals $1$, there exists some node $j$ such that $v_{j}^{S,T}$ equals 1 (i.e., there exists a link $(i, j)$ which is on the path $\pi(S,T)$); and $s_{k,j}^S$ equals 1 (there exists a link $(i_{k}, j)$ which is not on the path $\pi(S,T)$). Thus there exists two incoming links at node $j$, which violates the constraints in (\ref{eq:src-tree}).  Thus, for any node $k \notin \pi(S,T)$, $v_{k}^{S,T}$ equals 0. This concludes the proof.
\end{proof}

\begin{theorem} \label{th:psm2} 

Given PSMs and DMs as inputs, OCCAM infers a network $N'=(G',H,P')$; such that the routing paths P' satisfy the following properties:
\begin{enumerate}
\item The set of routing paths $P'$ contains an unique acyclic path between each pair of hosts; and
\item $G'$ and $P'$ satisfies all the given PSM and DM constraints.
\end{enumerate} 
\end{theorem}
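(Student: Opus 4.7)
The plan is to assemble the theorem out of the three lemmas (\ref{l:cycle-free}), (\ref{l:v-assign-mst}), and (\ref{l:v-assign-v-st}) already established in this appendix. The first claim asserts existence, uniqueness, and acyclicity of a path $\pi(S,T) \in P'$ for every host pair $(S,T)$; the second claim then translates the optimization variables back into the input PSM and DM observations.

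For existence I would start at $T$ and trace backwards through $P^S$. Constraint (\ref{eq:boundary-presence-dst}) guarantees a link $(i_1,T)$ with $s_{i_1,T}^{S}=1$, so the trace has a first step. For any subsequent node $i_k \neq S$ on the trace, constraint (\ref{eq:boundary-path-constraint}) guarantees a link $(i_{k+1},i_k)$ with $s_{i_{k+1},i_k}^{S}=1$, giving the next step. Lemma \ref{l:cycle-free} forbids cycles in ${\mathcal L}^S=\{(i,j):s_{i,j}^{S}=1\}$, so no node can be revisited; since $V$ is finite the trace must terminate. It cannot terminate at a host $T' \in H\setminus\{S\}$ other than the endpoint because the outgoing link at $T'$ in $P^S$ just placed on the trace would violate (\ref{eq:boundary-absence-dst}), and it cannot terminate at an internal node because (\ref{eq:boundary-path-constraint}) always supplies a predecessor there; hence the trace reaches $S$, yielding an acyclic $S$-to-$T$ path, which is precisely the path that Algorithm GRAPH-CONSTRUCT-I constructs. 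Uniqueness then follows from constraint (\ref{eq:src-tree}), which permits at most one incoming link at each node in $P^S$, making the backward trace deterministic.

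For the second claim I would invoke the two remaining lemmas. Lemma \ref{l:v-assign-v-st} gives $v_i^{S,T}=1$ if and only if $i$ lies on $\pi(S,T)$, so the two sides of the PSM constraint (\ref{eq:entanglement}) count exactly the number of nodes shared between the corresponding inferred paths; feasibility of the optimization therefore forces the inferred paths to obey the input PSM inequality. Lemma \ref{l:v-assign-mst} gives $m_T^S$ equal to the length of $\pi(S,T)$, so the DM constraint (\ref{eq:relative}) directly states that the inferred hop counts respect the input DM inequality.

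The main obstacle will be the backward-trace argument itself: one must carefully rule out the trace passing through an intermediate host or revisiting a node, and verify that the predecessor supplied by (\ref{eq:boundary-path-constraint}) at each step exists and is well-defined. Both issues reduce to the combination of Lemma \ref{l:cycle-free} with the source- and destination-side boundary conditions (\ref{eq:boundary-absence-src}) and (\ref{eq:boundary-absence-dst}); once this bookkeeping is discharged, the translation of PSM and DM observations through Lemmas \ref{l:v-assign-v-st} and \ref{l:v-assign-mst} is mechanical.
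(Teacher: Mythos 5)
Your proposal is correct and follows essentially the same route as the paper's own proof: a backward trace from $T$ seeded by (\ref{eq:boundary-presence-dst}), extended via (\ref{eq:boundary-path-constraint}), made acyclic by Lemma \ref{l:cycle-free} and unique by (\ref{eq:src-tree}), with the PSM and DM claims discharged through Lemmas \ref{l:v-assign-v-st} and \ref{l:v-assign-mst}. Your explicit use of (\ref{eq:boundary-absence-dst}) to rule out the trace passing through an intermediate host is a small bookkeeping step the paper leaves implicit, but the argument is otherwise identical.
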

\begin{proof} 
\begin{enumerate}
\item
Consider a pair of hosts $(S,T) \in H \times H$. The corresponding routing path $\pi(S,T) \in P'$ is constructed in Line 10 of Algorithm GRAPH-CONSTRUCT-I. For each $(S,T) \in H \times H$, the algorithm starts at destination $T$ and builds the path back towards source $S$. A node $i_0$ is initiated to $T$ in line 5 of the algorithm. In each iteration $j \in [1, k]$ of the while loop, a node $i_j$ is found such $s_{i_j, i_{j-1}}^S$ equals 1. Now there exists a path $\pi(i_j,T)$ which is a subpath of path $\pi(S,T)$. We will prove by induction that $\pi(i_j,T)$, exists, is unique and is acyclic. In the first iteration of the while loop, a node $i_1$ is found such that $s_{i_1,T}^S$ equals 1, i.e, link $(i_1, T)$ is on any of the paths from S. Constraints in (\ref{eq:boundary-presence-dst}),
\begin{align*}
\sum_{j \in V} s_{j, T}^S = 1,
\end{align*}
 ensures the presence of such a link. Thus $\pi(i_1,i_0)$ exists, is unique and is acyclic. Now for some $j \in [1, k]$, let's assume  $\pi(i_j,i_0)$ exists, is unique and is acyclic. Constraints in (\ref{eq:boundary-path-constraint}),
 \begin{align*}
s_{i_j,i_{j-1}}^S \leq \sum_{i \in V} s_{i,i_j}^S,
\end{align*}
ensure there exists an incoming link at node $i_j$ from some node $i_{j+1}$ such that $s_{i_{j+1},i_{j}}^S$ equals 1. Now constraints in (\ref{eq:src-tree}) ensures there exists at most 1 incoming link at node $j$; thus path $\pi(i_{j+1}, i_0)$ is unique. The path is acyclic because otherwise Lemma \ref{l:cycle-free} is violated. Thus $\pi(i_{j+1},i_0)$ exists, is unique and is acyclic. 

We now show that there exists some $k$ such that in the $k^{th}$ iteration, $i_k = S$ and the loop terminates. As the number of nodes, i.e, $|V|$ is finite, there must exist some $k$ such that in the $k^{th}$ iteration,
\begin{enumerate}
\item Node $i_k$ is already on the path $\pi(S,T)$. This is not possible as we know $\pi(i_k, i_0)$ is acyclic, or
\item Node $i_k = S$ and the loop terminates.
\end{enumerate}
Thus the path $\pi(S,T)$ constructed in line 10 of  Algorithm GRAPH-CONSTRUCT-I is a path from S to T, and $\pi(S,T)$ is acyclic and unique.
\item Given that Lemma \ref{l:v-assign-mst}, \ref{l:v-assign-v-st} hold, constraints in  (\ref{eq:relative}) and (\ref{eq:entanglement}) ensures P satisfy the given PSM and DM constraints.
\end{enumerate}
\end{proof}

\section{Correctness of Tree Stitching}
\label{app:correctness2}
\begin{lemma}\label{l:segment-path} If there exists a link $(i_1,i_2)$ that belongs to a segment $s \in {\mathcal T}^S$, then there exists a set of links $L^s$ that forms a path from node $i_1$ to some node $i_n \in V$ that marks the end of segment $s$ in $G$. If $s$ terminates at host $T \in {\mathcal T}^S$, then $i_n = T$. 
\end{lemma}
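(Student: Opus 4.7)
The plan is to construct $L^s$ step by step by walking forward from $i_1$ along segment-$s$ links. Starting from $(i_1, i_2)$, I would repeatedly ask what comes out of the current vertex in segment $s$; constraint (\ref{eq:logical-trees}) (or (\ref{eq:logical-trees-end}) when $s$ terminates at a host) is a flow-conservation statement forcing an outgoing link to exist whenever an incoming link is present, and (\ref{eq:one-out-link}) makes that outgoing link unique. The walk therefore extends deterministically until it reaches a vertex at which no outgoing link in $s$ is present; at that vertex it must be the case that either $s$ terminates at a host $T$ and the vertex equals $T$, or $s$ terminates at a branch point $b_s$ and the vertex equals $b_s$.

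Concretely, I would induct on the step index $k$. Suppose after $k$ steps we have produced $i_1, i_2, \dots, i_k$ with $p_{i_{j-1}, i_j}^s = 1$ for every $j \in \{2, \dots, k\}$. If $s$ terminates at host $T$ and $i_k = T$, stop with $i_n = T$; otherwise $i_k \in V \setminus T$, so (\ref{eq:logical-trees-end}) together with (\ref{eq:one-out-link}) yields a unique $i_{k+1}$ with $p_{i_k, i_{k+1}}^s = 1$ that we append to $L^s$. If instead $s$ terminates at a branch point $b_s$ and there is already an outgoing link at $i_k$ in some $s' \in {\mathcal O}(b_s)$, stop with $i_n = i_k$ and identify $i_k$ as $b_s$; otherwise (\ref{eq:logical-trees}) forces $\sum_m p_{i_k, m}^s \geq 1$ and (\ref{eq:one-out-link}) again picks $i_{k+1}$ uniquely. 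In either sub-case the new link is added to $L^s$ and the induction proceeds.

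For termination, I would invoke (\ref{eq:src-consistency}), which says every link with $p_{i,j}^s = 1$ also satisfies $s_{i,j}^S = 1$; hence the walk lives entirely inside ${\mathcal L}^S = \{(i,j) : s_{i,j}^S = 1\}$, which by Lemma \ref{l:cycle-free} contains no cycle. Since $V$ is finite, the walk must halt after at most $|V|$ steps, giving the desired simple path from $i_1$ to $i_n$. The main obstacle I anticipate is justifying that the halting vertex really matches the intended endpoint of $s$: for the terminal-segment case I will need the boundary constraint (\ref{eq:dst-boundary}) to rule out continuation past $T$ (so the walk cannot overshoot the host), and for the branch-point case I will need to argue that a non-branch-point vertex cannot simultaneously lie on segment $s$ and support the start of an outgoing segment $s' \in {\mathcal O}(b_s)$. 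That last point should follow from the way the source tree ${\mathcal T}^S$ is decomposed into disjoint segments glued exactly at well-defined branch points, but making it rigorous may require a short separate observation about the input tree structure.
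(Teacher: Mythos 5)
Your proposal follows essentially the same route as the paper's proof: walk forward from $(i_1,i_2)$ using constraint (\ref{eq:logical-trees}) (or (\ref{eq:logical-trees-end}) for host-terminating segments) to force a next link at every non-terminal vertex, and rule out infinite continuation by combining (\ref{eq:src-consistency}) with Lemma \ref{l:cycle-free} and the finiteness of $V$. The final concern you raise largely dissolves because the lemma (and the paper's proof) treats the halting vertex \emph{operationally} as ``the end of segment $s$ in $G'$''---the identification with the branch point $b_s$ is deferred to Lemma \ref{l:segment-branchpoint}---so your argument as outlined is complete.
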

\begin{proof}
If there exists a link $(i_1,i_2)$ belonging to segment $s$ (i.e, $p_{i_1, i_2}^s$ equals 1), then constraint in (\ref{eq:logical-trees}),
\begin{align*}
p_{i_1,i_2}^{s} \leq \sum_{k \in V} p_{i_2, k}^{s} +  \frac{1}{|{\mathcal O}(b_s)|} \sum_{s' \in {\mathcal O}(b_s)} \sum_{k \in V} p_{i_2,k}^{s'} = b_{i_2}^s
\end{align*}
is satisfied if RHS of the inequality is 1. RHS is a sum of two terms ($\sum_{k \in V} p_{i_2, k}^{s}$) and  ($\frac{1}{|{\mathcal O}(b_s)|} \sum_{s' \in {\mathcal O}(b_s)} \sum_{k \in V} p_{i_2,k}^{s'}$ ). Thus,  either (i) there exists an outgoing link $(i_2,i_3)$ for some $i_3 \in V$ that belongs to the same segment $s$ ie. $p_{i_2, i_3}^s = 1$ (satisfying the first term) or, (ii) $i_2$ marks the end of the segment and there exist outgoing links which belong to $s' \in O(s)$ (satisfying the second term). If the first case is true, constraint in (\ref{eq:logical-trees}) imposes the same constraints on link $(i_2, i_3)$. If the second case is true, node $i_2$ marks the end of segment $s$. Thus at each step, we either find an outgoing link $(i_n, i_{n+1})$ that belongs to segment $s$ or node $i_n$ marks the end of segment $s$. At each step, if the outgoing link belongs to the same segment $s$, link $(i_n, i_{n+1})$ would form a cycle at some finite $n$. Now, constraints in (\ref{eq:src-consistency}) ensure that if link $(i_n, i_{n+1})$ belongs to segment $s$, then $s_{i_n, i_{n+1}}^S$ equals $1$. Lemma \ref{l:cycle-free} ensures such links do not form a cycle. Thus, there exists a finite $n$ such that segment $s$ terminates at node $i_n$.

Now, if there exists a link $(i_1,i_2)$ that belongs to a segment $s$ that terminates at some host $T \in H$ (i.e, $p_{i_1, i_2}^s$ equals 1), then constraint in (\ref{eq:logical-trees-end}), 
\begin{align*}
p_{i_1,i_2}^{s} \leq \sum_{k \in V} p_{i_2,k}^{s} = b_{i_2}^s
\end{align*}
ensures there exists an outgoing link $(i_2, i_3)$ that belongs to segment $s$. Now, constraints in (\ref{eq:logical-trees-end}) imposes the same constraint on link $(i_2, i_3)$ and there exists an outgoing link at node $i_3$ that belongs to segment $s$. Thus, at each step if there exists an incoming link $(i_{n-1}, i_{n})$ that belongs to segment $s$, then there exists an outgoing link $(i_{n}, i_{n+1})$ that belongs to segment $s$. Now, as the number of nodes , i.e., $|V|$ is finite, there must exist some $n$ such that,
\begin{enumerate}
\item Node $i_{n}$ already belongs to segment $s$. This is not possible as it forms a cycle and violates Lemma \ref{l:cycle-free}, or
\item Node $i_n = T$.  In this case, constraints in (\ref{eq:logical-trees-end}) does not require the presence of an outgoing link at node $i_n$.
\end{enumerate}
This concludes the proof.
\end{proof} 

\begin{lemma}\label{l:segment-branchpoint} If node $i_n \in V$ marks the end of segment $s$ in $G'$, then $i_n$ marks the start of each segment $s' \in O(s)$, i.e., there exist outgoing links at node $i_n$, such that each outgoing link belongs to a segment $s' \in O(s)$.
\end{lemma}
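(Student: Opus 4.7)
\begin{pfsketch}
The plan is to combine the branch-point constraint (\ref{eq:logical-trees}) with the at-most-one-outgoing-link constraint (\ref{eq:one-out-link}) to force each child segment $s' \in {\mathcal O}(s)$ to contribute exactly one outgoing link at $i_n$. First, by Lemma \ref{l:segment-path}, since $i_n$ marks the end of segment $s$, there is some incoming link $(i_{n-1}, i_n)$ with $p_{i_{n-1}, i_n}^{s} = 1$, and no outgoing link at $i_n$ belongs to segment $s$ itself (i.e.\ $\sum_{k \in V} p_{i_n, k}^{s} = 0$), since otherwise the segment would continue past $i_n$.

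Next I would invoke constraint (\ref{eq:logical-trees}) applied to link $(i_{n-1}, i_n)$:
\begin{equation*}
1 \;=\; p_{i_{n-1}, i_n}^{s} \;\leq\; \sum_{k \in V} p_{i_n, k}^{s} \;+\; \frac{1}{|{\mathcal O}(b_s)|} \sum_{s' \in {\mathcal O}(b_s)} \sum_{k \in V} p_{i_n, k}^{s'}.
\end{equation*}
Since the first sum vanishes, the second term alone must be at least $1$, yielding
\begin{equation*}
\sum_{s' \in {\mathcal O}(b_s)} \sum_{k \in V} p_{i_n, k}^{s'} \;\geq\; |{\mathcal O}(b_s)|.
\end{equation*}

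Finally, constraint (\ref{eq:one-out-link}) applied at node $i_n$ for each child segment $s' \in {\mathcal O}(b_s)$ tells us $\sum_{k \in V} p_{i_n, k}^{s'} \leq 1$, so the double sum above is at most $|{\mathcal O}(b_s)|$. The inequalities must therefore be tight: for every $s' \in {\mathcal O}(s)$ we have $\sum_{k \in V} p_{i_n, k}^{s'} = 1$, i.e.\ there is exactly one outgoing link at $i_n$ belonging to $s'$. This gives the required starting link for each child segment.

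The main obstacle I anticipate is a purely definitional one: pinning down rigorously what it means for $i_n$ to ``mark the end of segment $s$ in $G'$,'' and in particular arguing cleanly that $\sum_k p_{i_n, k}^{s} = 0$ at that endpoint. Once this is settled (e.g.\ by appealing to the acyclicity established in Lemma \ref{l:cycle-free} together with constraint (\ref{eq:src-consistency}) that couples the $p^s$ variables to the $s_{i,j}^{S}$ variables), the algebraic argument above goes through immediately.
\end{pfsketch}
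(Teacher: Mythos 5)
Your argument is correct and takes essentially the same route as the paper's proof: both observe that the first term of constraint (\ref{eq:logical-trees}) at $i_n$ vanishes because the segment ends there, and then use the per-segment cap of (\ref{eq:one-out-link}) to force each child segment $s' \in {\mathcal O}(s)$ to contribute exactly one outgoing link. The only cosmetic difference is that the paper additionally invokes (\ref{eq:src-consistency}) to note that the resulting $|{\mathcal O}(s)|$ links are pairwise distinct (each link lies in at most one segment), a point your tightness argument leaves implicit but which does not affect the conclusion as stated.
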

\begin{proof}
As $i_n$ marks the end of segment $s$, there exists a link $(i_{n-1}, i_n)$ that belongs to segment $s$ (i.e., $p_{i_{n-1}, i_n}^S = 1$). Now, consider the constraint in (\ref{eq:logical-trees}),
\begin{align*}
p_{i_{n-1},i_n}^{s} \leq \sum_{k \in V} p_{i_n, k}^{s} +  \frac{1}{|{\mathcal O}(b_s)|} \sum_{s' \in {\mathcal O}(b_s)} \sum_{k \in V} p_{i_n,k}^{s'} = b_{i_n}^s
\end{align*}
As the number of incoming links at $i_n$ belonging to segment $s$ evaluates to 1, LHS in (\ref{eq:logical-trees}) equals 1. Now the RHS in the above inequality (\ref{eq:logical-trees}) is equated to a binary variable, thus RHS must evaluate to 1. The RHS is a sum of two parts. The first part counts the number of outgoing links at node $i_n$ which belongs to segment $s$. And the second part counts the number of outgoing links at node $i_n$, which belong to any of the segments $s' \in O(s)$, and divides the value by $|O(s)|$. The first part equals 0 as $i_n$ marks the end of segment $s$, forcing the second part to evaluate to $1$. This implies that there must exist $|O(s)|$ number of outgoing links at node $i_n$, and each such link should belong to any of the segments $s' \in O(s)$. Let $L^S$ be a set of such links. Now each link in $L^S$ belongs to a unique segment $s' \in O(s)$, i.e., (i) link $l \in L^S$ belongs to at most one segment (as enforced by constraints (\ref{eq:src-consistency})), (ii) no two links $l_1$ and $l_2$ in $L$ belongs to the same segment $s' \in O(s)$, (as enforced by constraints in (\ref{eq:one-out-link})). Thus there exist $|O(s)|$ outgoing links at node $i_n$, and each such link belongs to a unique segment $s' \in O(s)$.
\end{proof}

\begin{comment}
\begin{lemma}\label{l:last-segment} If there exists a link $(i_1,i_2)$ that belongs to a segment $s \in {\mathcal T}^s$, where $s$ is a segment that terminates at some host $T \in H$, then there exists a set of links $L^s$ that belongs to segment $s$, and links in $L^{s}$ forms a path from node $i_1$ to T.
\begin{proof}
\end{proof}
\end{lemma}
\end{comment}

\begin{theorem} \label{th:tree2} Given source trees and DMs as inputs, OCCAM infers a network $N'=(G',H,P')$; such that the routing paths P' satisfy the following properties
\begin{enumerate}
\item Each routing path $\pi(S,T) \in P'$ is an acyclic path from host $S$ to $T$; and
\item G' and P' is \textit{consistent} with the given source trees and DM measurement inputs.
\end{enumerate} 
\end{theorem}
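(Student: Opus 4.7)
The plan is to prove both properties by inducting on the structure of each input source tree, using the two preparatory lemmas (\ref{l:segment-path}) and (\ref{l:segment-branchpoint}) as the workhorses, together with Lemma~(\ref{l:cycle-free}) for acyclicity and constraint~(\ref{eq:relative}) for DM satisfaction.

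First I would fix a host $S \in H$ and argue by induction on segment depth in ${\mathcal T}^S$ that the sub-graph of $G'$ induced by the links with $p_{i,j}^s = 1$ for some $s \in {\mathcal T}^S$ is isomorphic to ${\mathcal T}^S$. For the base case, the root segment $s_0$ originates at $S$, so constraint~(\ref{eq:src-boundary}) forces some link $(S, k)$ with $p_{S,k}^{s_0} = 1$, and Lemma~(\ref{l:segment-path}) extends this to a full path realizing $s_0$, ending at a node $i_{n_0} \in V'$ that represents the branch-point $b_{s_0}$ (or at a host $T$, via the terminal clause of the lemma together with constraint~(\ref{eq:dst-inc-boundary})). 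For the inductive step, whenever a segment $s$ terminates at an internal branch-point node $i_n$, Lemma~(\ref{l:segment-branchpoint}) guarantees $|{\mathcal O}(s)|$ distinct outgoing links at $i_n$, each belonging to a unique child segment $s' \in {\mathcal O}(s)$; applying Lemma~(\ref{l:segment-path}) to each such link grows the child segment into a path, and the induction continues. Constraint~(\ref{eq:dst-boundary}) ensures no outgoing link from any host $T$ exists in $s$, so paths terminate correctly at destinations. This establishes the promised isomorphism between the input source tree and the logical source tree induced by the links in $P^S$, which in turn implies that Algorithm GRAPH-CONSTRUCT-II extracts a well-defined path $\pi(S,T)$ from $S$ to each host $T \in H$.

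Next I would establish acyclicity of each $\pi(S,T)$. Constraint~(\ref{eq:src-consistency}) couples the segment variables to the source-tree variables via $\sum_{s \in {\mathcal T}^S} p_{i,j}^s = s_{i,j}^S$, so the set ${\mathcal L}^S = \{(i,j) : s_{i,j}^S = 1\}$ coincides with the union of links belonging to any segment in ${\mathcal T}^S$. Lemma~(\ref{l:cycle-free}) then applies directly — its proof only uses constraints~(\ref{eq:distance}) and (\ref{eq:src-tree}), which are retained in the tree-stitching formulation — so ${\mathcal L}^S$ is acyclic, and therefore every $\pi(S,T) \subseteq {\mathcal L}^S$ extracted by the algorithm is acyclic. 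Consistency with the source trees is exactly the isomorphism statement proved in the previous paragraph, and consistency with DM constraints is immediate from the presence of constraint~(\ref{eq:relative}) in the modified optimization.

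The main obstacle will be the inductive stitching step: specifically, arguing that when a segment $s$ terminates at some node $i_n \in V'$, \emph{all} of its $|{\mathcal O}(s)|$ outgoing child segments actually begin at the \emph{same} node $i_n$, rather than being scattered among several distinct nodes of $G'$. This is precisely the content encoded by the second term of constraint~(\ref{eq:logical-trees}) together with the ``at most one outgoing link per segment'' constraint~(\ref{eq:one-out-link}), and is isolated in Lemma~(\ref{l:segment-branchpoint}); once that lemma is in hand the induction is bookkeeping, but without it one could imagine spurious solutions in which a single branch-point of the input tree is realized as several disjoint forks in $G'$, violating isomorphism. A secondary subtlety is verifying that the induction actually reaches every host $T \in H$: this follows because constraint~(\ref{eq:dst-inc-boundary}) forces an incoming link for every destination-terminating segment, so the recursion cannot halt before exhausting the leaves of ${\mathcal T}^S$.
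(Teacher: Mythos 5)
Your proposal is correct and follows essentially the same route as the paper's proof: start the root segment via constraint~(\ref{eq:src-boundary}), grow each segment into a path with Lemma~\ref{l:segment-path}, fan out at branch points with Lemma~\ref{l:segment-branchpoint}, recurse over the tree, and conclude acyclicity from the tree structure of $P^S$ and DM consistency from constraint~(\ref{eq:relative}). Your explicit routing of acyclicity through constraint~(\ref{eq:src-consistency}) and Lemma~\ref{l:cycle-free} is a slightly more careful rendering of the same argument.
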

\begin{proof}
Algorithm GRAPH-CONSTRUCT-II constructs the network $N=(G,H,P)$ using the values assigned by the optimization to variables in Figure \ref{table:variables}. For a host $S \in H$, Lines 3 to 7 in Algorithm GRAPH-CONSTRUCT-II finds a set of links $P^S$ that belong to any of the segments $s \in {\mathcal T}^S$. We will show that links in $P^S$ contains a path from $S$ to each host $T \in H$, and $P^S$ is \textit{consistent} with the source tree ${\mathcal T}^S$.  Let $s_1$ be the segment in ${\mathcal T}^S$ that originates at source host $S$. Now, constraints in (\ref{eq:src-boundary}) ensure there exists a link $(S, k)$ for some $k \in V$ such that link $(S,k)$ belongs to segment $s_1$, i.e. , variable $p_{S,k}^{s_1}$ equals 1. In the presence of such a link, Lemma \ref{l:segment-path} proves there exists a set of links that forms a path $\pi(S, i_{n})$ from host $S$ to some node $i_{n}$, such that each link in $\pi(S, i_{n})$ belongs to segment $s_1$. The path  $\pi(S, i_{n})$ corresponds to segment $s_1$, and branch point $b_{s_1}$ can be mapped to node $i_{n}$. Now Lemma \ref{l:segment-branchpoint} ensures there exist $|O(s_1)|$ outgoing links at node $i_n$ such that that each such link belongs to a unique segment $s' \in O(s_1)$. Lemma \ref{l:segment-path} can now be applied for each segment $s' \in O(s)$ to find a path $\pi(i_{n}, i_{n_s'})$ corresponding to segment $s'$, and branch point $b_{s'}$ can be mapped to node $i_{n_{s'}}$. If segment $s'$ ends at some host $T \in H$,  then Lemma \ref{l:segment-path} proves there exists a set of links that form the path $\pi(i_{n_{s'}}, T)$ which corresponds to segment $s'$. Thus, we showed that each segment $s \in {\mathcal T}^S$ corresponds to a path in $G'$, and each branch point $b_s \in {\mathcal B}^S$ can be mapped to a node $v \in V'$ making $P^S$ consistent with ${\mathcal T}^S$.

Now as $P^S$ forms a tree rooted at host $S$ with each $T \in H$ as a leaf, the path from source $S$ to each host $T \in H$ is unique and acyclic. Lemma \ref{l:v-assign-mst} still holds and thus constraints in Equation \ref{eq:relative} ensure that G' and P' is \textit{consistent} with the DM measurements. 
\end{proof}

\end{document}